\documentclass[11pt,a4paper]{article}

\usepackage[utf8]{inputenc}
\usepackage[T1]{fontenc}
\usepackage{lmodern}
\usepackage{textcomp}
\usepackage{amsmath, amsthm, amssymb}
\usepackage{mathtools}
\usepackage{booktabs}
\usepackage{array}
\usepackage{tabularx}
\usepackage{longtable}
\usepackage{enumitem}
\usepackage{hyperref}
\usepackage[margin=1in]{geometry}
\usepackage{microtype}
\usepackage{xcolor}
\usepackage{listings}

\hypersetup{
  colorlinks=true,
  linkcolor=black!70!blue,
  citecolor=black!70!blue,
  urlcolor=black!70!blue,
}

\newtheorem{theorem}{Theorem}[section]

\theoremstyle{definition}

\theoremstyle{remark}

\newcommand{\Path}{\mathsf{Path}}
\newcommand{\Type}{\mathsf{Type}}
\newcommand{\Glue}{\mathsf{Glue}}
\newcommand{\hcomp}{\mathsf{hcomp}}
\newcommand{\transp}{\mathsf{transp}}
\newcommand{\refl}{\mathsf{refl}}
\newcommand{\uafn}{\mathsf{ua}}
\newcommand{\Fin}{\mathsf{Fin}}

\newcommand{\nn}{Navya-Nyāya}
\newcommand{\TC}{\emph{Tattvacintāmaṇi}}
\newcommand{\PTN}{\emph{Padārtha-tattva-nirūpaṇam}}

\lstdefinestyle{cttstyle}{
  basicstyle=\ttfamily\small,
  keywordstyle=\color{blue!60!black}\bfseries,
  commentstyle=\color{green!50!black}\itshape,
  stringstyle=\color{red!60!black},
  showstringspaces=false,
  breaklines=true,
  numbers=none,
  frame=single,
  framesep=4pt,
  rulecolor=\color{black!30},
  xleftmargin=1em,
  xrightmargin=1em,
  extendedchars=true,
  inputencoding=utf8,
  literate=
    {ā}{{\={a}}}1 {ī}{{\={\i}}}1 {ū}{{\={u}}}1 {ē}{{\={e}}}1 {ō}{{\={o}}}1
    {Ā}{{\={A}}}1 {Ī}{{\={I}}}1 {Ū}{{\={U}}}1
    {ṛ}{{\d{r}}}1 {ṝ}{{\d{\=r}}}1 {ḷ}{{\d{l}}}1
    {ṭ}{{\d{t}}}1 {ḍ}{{\d{d}}}1 {ṇ}{{\d{n}}}1 {ṣ}{{\d{s}}}1
    {ṃ}{{\d{m}}}1 {ḥ}{{\d{h}}}1
    {ṅ}{{\.{n}}}1 {ñ}{{\~{n}}}1 {ś}{{\'{s}}}1
    {Ś}{{\'{S}}}1 {Ṣ}{{\d{S}}}1 {Ṭ}{{\d{T}}}1 {Ḍ}{{\d{D}}}1 {Ṇ}{{\d{N}}}1
    {¹}{{\textsuperscript{1}}}1 {²}{{\textsuperscript{2}}}1 {³}{{\textsuperscript{3}}}1
    {₀}{{$_0$}}1 {₁}{{$_1$}}1 {₂}{{$_2$}}1 {₃}{{$_3$}}1 {₄}{{$_4$}}1
    {ℕ}{{$\mathbb{N}$}}1 {ℤ}{{$\mathbb{Z}$}}1
    {Σ}{{$\Sigma$}}1 {Π}{{$\Pi$}}1 {λ}{{$\lambda$}}1 {μ}{{$\mu$}}1 {ν}{{$\nu$}}1
    {α}{{$\alpha$}}1 {β}{{$\beta$}}1 {γ}{{$\gamma$}}1 {τ}{{$\tau$}}1
    {≃}{{$\simeq$}}1 {≡}{{$\equiv$}}1 {≢}{{$\not\equiv$}}1
    {≤}{{$\leq$}}1 {≥}{{$\geq$}}1 {≠}{{$\neq$}}1
    {→}{{$\to$}}1 {⇒}{{$\Rightarrow$}}1 {↦}{{$\mapsto$}}1
    {¬}{{$\neg$}}1 {⊥}{{$\bot$}}1 {⊤}{{$\top$}}1
    {∀}{{$\forall$}}1 {∃}{{$\exists$}}1 {∈}{{$\in$}}1 {∉}{{$\notin$}}1
    {∪}{{$\cup$}}1 {∩}{{$\cap$}}1 {⊆}{{$\subseteq$}}1
    {∧}{{$\wedge$}}1 {∨}{{$\vee$}}1
    {×}{{$\times$}}1 {÷}{{$\div$}}1 {±}{{$\pm$}}1
    {⊕}{{$\oplus$}}1 {⊗}{{$\otimes$}}1
    {⟨}{{$\langle$}}1 {⟩}{{$\rangle$}}1
    {⁻}{{$^{-}$}}1
    {§}{{\S}}1 {¶}{{\P}}1 {°}{{\textdegree}}1
    {—}{{---}}1 {–}{{--}}1 {…}{{\ldots}}1
    {“}{{``}}1 {”}{{''}}1 {‘}{{`}}1 {’}{{'}}1
}
\title{Cubical Type Theoretic Navya-Nyāya}

\author{
  Mrityunjoy Panday\\
  \texttt{mrityunjoy@gmail.com}
  \and
  Sudipta Ghosh\\
  \texttt{sudipta002@gmail.com}
}

\date{\today}

\begin{document}

\maketitle

\begin{abstract}
\noindent We present a formalization of the technical language of Navya-Nyāya — the ``New Logic'' school of late-classical Indian philosophy — in CCHM De Morgan cubical type theory (CTT). Previous formalization attempts in first-order logic (Matilal), higher-order logic (Ganeri), and Martin-Löf type theory (Bhattacharyya) each lose load-bearing structure: dependent delimitation (\emph{avacchedaka}), typed absence (\emph{abhāva}), non-extensional identity (\emph{tādātmya}), or unbounded relational depth (\emph{paramparā-sambandha}). We argue that CTT closes this gap natively. We give CTT encodings for seven core constructs (\emph{sambandha}, \emph{avacchedaka}, \emph{abhāva}, \emph{vyāpti}, \emph{tādātmya}, higher relations, \emph{paryāpti}) plus the qualifier-qualificand structure; develop a stratified-universe foundation for the \emph{padārtha} system; and prove four signature theorems internal to the encoding (Involution of \emph{abhāva}, \emph{Kevalānvayi}-irreducibility, Coextension-without-identity, No-h-set-collapse) and six metatheoretic results (soundness, conservativity, faithfulness, distinction-preservation, decidability, commentarial conservativity). We close with worked encodings of fifteen \TC{} passages, comparison with prior formalizations, an implementation sketch in Cubical Agda, and five distinguishing predictions — including a novel argument from \nn{}'s involutive-negation doctrine for the \emph{necessity} of De Morgan over Cartesian cubical foundations.

\medskip
\noindent\textbf{Keywords:} Navya-Nyāya, cubical type theory, formal philosophy, dependent types, higher inductive types, autoformalization.
\end{abstract}

\tableofcontents

\section{Introduction}

\subsection{Navya-Nyāya as a precision logic of relations}

Navya-Nyāya — the ``New Logic'' — emerged in fourteenth-century Mithilā in the work of Gaṅgeśa Upādhyāya and reached expressive maturity in seventeenth-century Navadvīpa under Raghunātha Śiromaṇi, Mathurānātha Tarkavāgīśa, Jagadīśa Tarkālaṃkāra, and Gadādhara Bhaṭṭācārya. Its philosophical commitments — robust realism about universals, categorial pluralism, substantival ontology of relations — are continuous with the older Nyāya–Vaiśeṣika synthesis \cite{potter1977, halbfass1992}. Its \emph{technique} is not: over four centuries the school constructed a precision-engineered technical idiom that exhibits logical structure at the surface of expression.

A canonical \nn{} sentence such as

\begin{quote}
\emph{bhūtale ghaṭa-pratiyogika-saṃyoga-sambandhāvacchinna-pratiyogitā-niṣṭha-atyantābhāva-vat}
\end{quote}

— ``(the floor) is the locus of an absolute absence whose counter-correlate-ness is delimited by the relation of contact whose counter-correlate is a pot'' — does not so much describe a logical structure as exhibit it \cite{ingalls1951, matilal1968, ganeri1999}. Agglutinative -\emph{tā} abstracts (\emph{pratiyogi-tā}, \emph{avacchedaka-tā}), systematic compounding of delimitor and delimited, and iterated nesting of qualifications produce a notation in which every relational role is morphologically marked.

The morphological precision reflects an underlying metatheoretic stance: \nn{} takes relations as ontologically primary objects of theory, not predicative auxiliaries. The relational vocabulary — \emph{sambandha}, \emph{avacchedaka}, \emph{pratiyogin/anuyogin}, \emph{vyāpti}, \emph{abhāva}, \emph{paryāpti}, \emph{viśeṣaṇa-viśeṣya}, \emph{paramparā-sambandha}, \emph{tādātmya} — is the surface of a coherent theory of relational structure. Daniel Ingalls famously called the result ``the most precise calculus of properties produced before Frege'' \cite[p.\,1]{ingalls1951}.

We retain this judgement with one refinement. \nn{} is not a \emph{calculus} in the Hilbert/Frege sense but a \emph{precision logic}: a discipline of expressing logical structure that does not separate object-level reasoning from metatheoretic discourse. The same vocabulary asserting a logical claim is used to argue \emph{about} logical claims. \nn{} is \emph{self-presenting} — its inferential structure is meant to be visible from within the same expressive apparatus that does the inferring.

\subsection{The autoformalization gap}

Three sustained efforts illustrate the pattern of forced compromise.

\paragraph{Matilal (FOL).}
Bimal Krishna Matilal's career-long project \cite{matilal1968, matilal1985, matilal1998} renders \nn{} in first-order predicate logic. The strategy succeeds for many surface logical claims but has three load-bearing failures. First, \emph{avacchedaka} — the delimitor restricting a relation's scope — has no proper FOL home; rendering it as subscript or extra argument-place captures the \emph{fact} of delimitation but loses the \emph{type-level binding}. Second, \emph{abhāva} — the positive entity \nn{} associates with absence — is treated as logical $\neg$, losing the data (\emph{pratiyogin}, delimitor, mode of relation) that distinguishes \nn{} absence from logical negation. Third, \emph{paramparā-sambandha} — relation-via-mediation, of unbounded depth — has no satisfactory FOL rendering.

\paragraph{Ganeri (HOL).}
Jonardon Ganeri \cite{ganeri1999, ganeri2001, ganeri2008} advances to higher-order logic and recovers properties-of-properties and abstraction over predicates. But HOL does not give \emph{dependency}: an \emph{avacchedaka} delimiting a \emph{pratiyogin} is a type-level binder, not a free variable. HOL has predicates and predicates-of-predicates but not predicates \emph{indexed by terms}. HOL also retains classical extensionality, conflicting with \nn{}'s robust intensionalism (\emph{ghaṭatva}/\emph{kalaśatva} coextension distinction; \cite[ch.\,4]{mohanty1992}).

\paragraph{Bhattacharyya (MLTT).}
Sibajiban Bhattacharyya \cite{bhattacharyya1990, bhattacharyya1996} takes the next step into dependent type theory in the spirit of Martin-Löf. This recovers \emph{avacchedaka}-as-$\Pi$ and a typed \emph{pratiyogin}. For substantial fragments of \nn{} the framework is adequate. But standard MLTT has propositional equality (Id) without path equality: identity is too rigid for \emph{tādātmya}-relations between cognitively indistinguishable entities, and h-set collapse defeats \emph{paramparā-sambandha}.

\paragraph{Pattern.}
The pattern is interlocking. An inadequate \emph{avacchedaka} contaminates \emph{abhāva}, which contaminates \emph{vyāpti}, which contaminates \emph{anumāna} as a whole. By the time one reaches Gadādhara's \emph{Vyutpattivāda} or Jagadīśa's \emph{Śabdaśakti-prakāśikā}, every early compromise is paid for. We call this structural mismatch the \textbf{autoformalization gap}.

\subsection{Why cubical type theory}

Cubical type theory in the CCHM presentation \cite{cohen2018} arose from a different motivating problem: making Voevodsky's univalence axiom \emph{compute}. By taking paths (rather than identity types) as primitive — represented as functions from an interval object $\mathbb{I}$ with endpoints 0, 1 — and equipping the interval with a \emph{De Morgan structure} (meet, join, involution), CCHM makes univalence a theorem rather than an axiom and $\transp(\uafn(e), x) = e(x)$ a definitional equality.

Each expressive demand that defeated previous formalizations is met natively by some specific CTT feature.

\begin{itemize}
\item \textbf{Sambandha and paths.} A relation in \nn{} can stand as a relatum of further relations. In CTT a path is itself a term of a path-type, which itself supports paths. \emph{Sambandha} encodes as $\Path\,A\,x\,y$; \emph{paramparā-sambandha}'s unbounded depth corresponds to path-types iterating without truncation.
\item \textbf{Avacchedaka and $\Pi$.} A delimitor encodes as a $\Pi$-binding; the relation's type \emph{depends on} its delimitor.
\item \textbf{Abhāva and HITs.} A higher inductive type whose path-constructor records \emph{pratiyogin}-data realizes typed absence directly.
\item \textbf{Tādātmya and computational univalence.} CTT's path-type with computational $\uafn$ is rich enough to represent indiscernibility while remaining decidable.
\item \textbf{De Morgan and \emph{abhāva}-of-\emph{abhāva}.} The De Morgan involution $1 - r$ on $\mathbb{I}$ encodes Raghunātha's doctrine that \emph{abhāva} of \emph{abhāva} is \emph{bhāva} as a definitional equality. This is novel here and yields a distinguishing argument for CCHM over Cartesian cubical (§\ref{sec:phil-imp-demorgan}).
\end{itemize}

\subsection{Contributions}

\noindent (C1) A type-theoretic encoding of seven core \nn{} constructs in CCHM CTT, justified textually and constraint-wise (§§\ref{sec:sambandha}--\ref{sec:visesana}).

\noindent (C2) A foundational architecture: the \emph{padārtha} system as a stratified universe hierarchy with \emph{saṃkara}-prevention as typing discipline (§\ref{sec:padartha}).

\noindent (C3) Six metatheoretic theorems: soundness, conservativity, faithfulness, distinction-preservation, decidability, commentarial conservativity (§\ref{sec:metatheory}).

\noindent (C4) Four signature theorems internal to the encoding: Involution of \emph{abhāva} (§\ref{sec:abhava-involution}), \emph{Kevalānvayi}-irreducibility (§\ref{sec:kevalanvayi}), Coextension-without-identity (§\ref{sec:coextension}), No-h-set-collapse (§\ref{sec:nohset}).

\noindent (C5) Worked encodings of fifteen TC passages across all four \emph{khaṇḍas} (§\ref{sec:examples}).

\noindent (C6) Five distinguishing predictions, including the De Morgan necessity argument (§\ref{sec:phil-imp-demorgan}).

We do \emph{not} automate autoformalization. The Sanskrit-to-CTT pipeline is downstream (§\ref{sec:future}). What we offer is the formal target.

\subsection{Outline and reading paths}

§\ref{sec:nn-prelim}--§\ref{sec:ctt-prelim} are preliminaries on \nn{} and CTT. §§\ref{sec:padartha}--\ref{sec:visesana} are the encoding, organized topically. §\ref{sec:examples} collects worked examples. §\ref{sec:metatheory} collects metatheory. §\ref{sec:comparison} compares with prior formalizations. §\ref{sec:implementation} sketches Cubical Agda implementation. §§\ref{sec:phil}--\ref{sec:conclusion} are philosophical and prospective.

Suggested reading paths:
\begin{itemize}[itemsep=0pt]
\item \nn{} philosophers — §\ref{sec:intro}, §\ref{sec:ctt-prelim}, §\ref{sec:sambandha}, §\ref{sec:abhava}, §\ref{sec:examples}.
\item CTT specialists — §\ref{sec:intro}, §\ref{sec:nn-prelim}, §\ref{sec:examples}, §\ref{sec:metatheory}.
\item Formal philosophers — linear.
\item Implementers — §\ref{sec:intro}, §\ref{sec:ctt-prelim}, §§\ref{sec:padartha}--\ref{sec:visesana} (in outline), §\ref{sec:implementation}, Appendix \ref{app:agda}.
\end{itemize}

We use IAST throughout. Sanskrit is italicized at first occurrence. Glossary in Appendix \ref{app:glossary}.

\section{Preliminaries: Navya-Nyāya}\label{sec:nn-prelim}

We summarize the technical vocabulary needed for §§\ref{sec:padartha}--\ref{sec:visesana}. References: \cite{ingalls1951, matilal1968, matilal1998, bhattacharyya1990, ganeri1999, ganeri2001, mohanty1992, phillips1995, bhatt1989}.

\subsection{Sources and methodology of textual reconstruction}

Our primary source is Gaṅgeśa's \TC{} (TC, c.\,1325) in Tarkavāgīśa's Bibliotheca Indica edition \cite{tc-source}. We rely on the commentaries of Mathurānātha (\emph{Maṇi-prakāśa}), Jayadeva (\emph{Āloka}), and Pakṣadhara Miśra (\emph{Āloka-darpaṇa}) for disambiguation. Raghunātha's \PTN{} \cite{raghunatha-ptn} is our principal source for categorial revisions and \emph{paramparā-sambandha}. Gadādhara's \emph{Vyutpattivāda} and \emph{Śaktivāda} and Jagadīśa's \emph{Śabdaśakti-prakāśikā} supply later refinements.

The methodology is reconstructive: we take the technical language at its working face value, treat commentaries as disambiguating glosses, and resist anachronistic harmonization. Where the tradition is internally divided we follow the dominant Gaṅgeśa-as-glossed-by-Mathurānātha line.

\subsection{The \emph{padārtha} architecture}

The Vaiśeṣika categorial system, as inherited by \nn{}, recognizes seven \emph{padārtha}s: \emph{dravya} (substance), \emph{guṇa} (quality), \emph{karman} (motion), \emph{sāmānya} (universal), \emph{viśeṣa} (ultimate differentiator), \emph{samavāya} (inherence), and \emph{abhāva} (absence). The categorial separation principle — \emph{saṃkara-doṣa} — forbids a token of one category from inhabiting another: a \emph{guṇa} cannot be a \emph{dravya}; a \emph{sāmānya} cannot itself instantiate a further \emph{sāmānya} (the \emph{anavasthā} argument). The hierarchy is strict and supports the universe-stratification we develop in §\ref{sec:padartha}.

\subsection{\emph{Sambandha} (relation)}

A \emph{sambandha} in \nn{} is an entity, not a predicate. It has \emph{relata}, modes (\emph{pratyāsatti}), and can stand as a relatum of further relations. The principal modes are:

\begin{itemize}[itemsep=0pt]
\item \emph{Saṃyoga} (contact): symmetric, occurrent between substances.
\item \emph{Samavāya} (inherence): asymmetric; obtains between universal and instance, substance and quality, whole and part.
\item \emph{Svarūpa-sambandha} (self-linking): the relation an entity bears to itself.
\item \emph{Tādātmya} (identity-relation): the relation between an entity and what is, in the relevant respect, indiscernible from it.
\end{itemize}

\emph{Paramparā-sambandha} is mediated relation: $A$ is related to $C$ \emph{via} $B$. \nn{} admits arbitrary mediation depth, treats the mediated relation as a single object, and allows it to itself stand in further relations \cite[§iii]{raghunatha-ptn}.

\paragraph{First commitment.} A relation is not a 0-truncated proposition; it has structure, modes, and admits relations-of-relations.

\subsection{\emph{Avacchedaka} (delimitor)}

Whenever a relation $R$ obtains, it obtains \emph{under a delimitor}. The \emph{pratiyogin} of an \emph{abhāva} is delimited by some property $A$, the \emph{anuyogin} by another, the \emph{sambandha} itself by yet another. The delimitor restricts the scope: ``the absence of pots-as-delimited-by-clay-substanceness on the floor'' is distinct from ``the absence of pots-as-delimited-by-narrowness on the floor.''

Delimitation iterates: a delimitor can have its own delimitor (\emph{avacchedaka-tā-avacchedaka}). The non-redundancy condition (\emph{lāghava}) requires the delimitation to do real work.

The delimitor is not an external parameter; it binds into the type of the relation it qualifies. This is the technical move that makes \nn{}'s calculus precise.

\subsection{\emph{Abhāva} (typed absence)}

\nn{} treats absence as a positive entity with structured data:

\begin{itemize}[itemsep=0pt]
\item \emph{pratiyogin} (counter-correlate — the thing absent),
\item \emph{anuyogin} (correlate — the locus of the absence),
\item \emph{avacchedaka} on the \emph{pratiyogin},
\item \emph{sambandha} by which the \emph{pratiyogin} would have been related to the \emph{anuyogin} had the absence not held.
\end{itemize}

Four canonical sub-types:

\begin{itemize}[itemsep=0pt]
\item \emph{Prāgabhāva} — prior absence; obtains before the entity comes into being.
\item \emph{Pradhvaṃsābhāva} — posterior absence; obtains after destruction.
\item \emph{Atyantābhāva} — constant/absolute absence; obtains across all time.
\item \emph{Anyonyābhāva} — mutual absence; ``$x$ is not $y$,'' construed as a positive relation.
\end{itemize}

A doctrine load-bearing in our treatment: \emph{abhāva of abhāva is bhāva} — double absence collapses to presence, an involutive structure \cite[§iv]{raghunatha-ptn}.

\subsection{\emph{Vyāpti} and \emph{anumāna}}

The \emph{anumāna-khaṇḍa} of TC develops the theory of inference. The hetu pervades the \emph{sādhya} when a counterexample is impossible. Gaṅgeśa's eight-stage definitional development settles on conditions:

\begin{itemize}[itemsep=0pt]
\item universal co-occurrence (\emph{sahacāra}) of hetu and \emph{sādhya};
\item non-deviation (\emph{avyabhicāra});
\item defeasibility by \emph{upādhi} — a defeating condition that pervades the \emph{sādhya} without pervading the hetu.
\end{itemize}

\nn{} distinguishes three vyāpti shapes by the polarity of witnessing instances: \emph{kevalānvayi} (only positive), \emph{kevalavyatireki} (only negative; disputed), \emph{anvaya-vyatireki} (both). Inferential failures — \emph{hetvābhāsa} — are: \emph{asiddha} (unestablished), \emph{viruddha} (contradictory), \emph{anaikāntika / sādhāraṇa} (over-extending), \emph{bādhita} (defeated).

\subsection{\emph{Paryāpti} (numerical exhaustion)}

When \nn{} says ``two-ness inheres in two pots,'' the two-ness is \emph{paryāptimat} — exhaustively distributed in the pair, neither over- nor under-counted. Paryāpti is a typed cardinality: closer to type-theoretic $\Sigma$-over-$\Fin$ than to set-cardinality.

\subsection{\emph{Viśeṣaṇa-viśeṣya-bhāva}}

The structure of cognition (\emph{jñāna}) is qualifier-qualificand: a cognition has a principal qualificand (\emph{mukhya-viśeṣya}), one or more qualifiers (\emph{viśeṣaṇa}), and a mode of qualification (\emph{prakāratā}). The structure iterates: a qualifier can itself be qualified.

\section{Preliminaries: Cubical Type Theory}\label{sec:ctt-prelim}

We summarize CCHM De Morgan cubical type theory at a level adequate for the encodings. References: \cite{cohen2018, hottbook, vezzosi2021}.

\subsection{Why CCHM rather than Cartesian cubical}

CTT comes in two principal variants. The CCHM \cite{cohen2018} variant equips the interval with a De Morgan algebra (meet $\wedge$, join $\vee$, involution $1-r$). The Cartesian variant \cite{abcfhl} drops the involution. For most foundational mathematics the difference is technical. For \nn{} it is decisive: §\ref{sec:phil-imp-demorgan} argues the involutive structure of \emph{abhāva}-of-\emph{abhāva} selects CCHM on philosophical grounds.

\subsection{The interval object $\mathbb{I}$}

The interval is a primitive judgement-form. We have endpoints $0:\mathbb{I}$, $1:\mathbb{I}$, and operations: meet $r \wedge s$, join $r \vee s$, and involution $1 - r$. These satisfy:
\begin{itemize}[itemsep=0pt]
\item $1 - (1 - r) = r$ (involutivity),
\item $1 - (r \wedge s) = (1-r) \vee (1-s)$ (De Morgan),
\item lattice laws.
\end{itemize}
Involutivity is the algebraic structure underlying our encoding of \emph{abhāva} in §\ref{sec:abhava}.

\subsection{Path types}

For type $A$ and $x, y : A$, the path type $\Path\,A\,x\,y$ consists of $p : \mathbb{I} \to A$ with $p\,0 = x$, $p\,1 = y$. We write $p\,i$ for application. Reflexivity: $\refl\,x \coloneqq \lambda \_.\,x : \Path\,A\,x\,x$. Symmetry: $p^{-1} \coloneqq \lambda i.\,p\,(1-i) : \Path\,A\,y\,x$.

\subsection{\texorpdfstring{$\hcomp$ and $\transp$}{hcomp and transp}}

Homogeneous composition $\hcomp$ fills in cubes given partial boundary data. Transport $\transp$ moves elements along type-level paths: given $P : \mathbb{I} \to \Type$, $r : \mathbb{I}$, and $x : P\,r$, we obtain $\transp\,P\,r\,x : P\,1$. When $P$ is constant, transport reduces to identity. These two equip path-types with the full higher groupoid structure.

\subsection{Glue types and computational univalence}

The Glue construction takes $B$ and $e : A \simeq B$ and produces $\Glue\,B\,[e]$ definitionally equal to $A$ along $i = 0$ and to $B$ along $i = 1$. From this we derive $\uafn : (A \simeq B) \to \Path\,\Type\,A\,B$, with $\transp\,(\uafn\,e)\,x = e.\mathrm{fst}\,x$ holding \emph{definitionally}. This is computational univalence.

\subsection{Higher inductive types}

A HIT is an inductive type with both point-constructors and path-constructors. Example (the circle):
\begin{lstlisting}
data S¹ : Type where
  base : S¹
  loop : Path S¹ base base
\end{lstlisting}
HITs are derivable in CCHM (not postulates) and admit computation rules. We use them in §\ref{sec:abhava} for \emph{abhāva}.

\subsection{Universe stratification}

We work in $\Type_0 : \Type_1 : \Type_2 : \dots$ For \nn{} we stratify universes by \emph{padārtha} (§\ref{sec:padartha}). Cumulativity is assumed.

\subsection{Notational conventions}

$\Sigma$-types: $\Sigma (x : A)\,B(x)$. $\Pi$-types: $(x : A) \to B(x)$. Path-types: $\Path\,A\,x\,y$. We elide universe levels except where they bear weight. Equivalences $A \simeq B$ are dependent pairs of a function and a proof of equivalence.

\section{Foundational Architecture: \emph{Padārtha} as a Universe Hierarchy}\label{sec:padartha}

We argue that the \emph{padārtha} system is a stratified universe hierarchy with a typing discipline preventing cross-category inhabitation.

\subsection{The categorial universes}

We posit, for each Vaiśeṣika \emph{padārtha}, a CTT universe:

\begin{lstlisting}
U_dravya     : Type₁
U_guṇa       : Type₁
U_karman     : Type₁
U_sāmānya    : Type₂
U_viśeṣa     : Type₁
U_samavāya   : Type₂
U_abhāva     : Type₂
\end{lstlisting}

\emph{Sāmānya}, \emph{samavāya}, and \emph{abhāva} live one universe higher because they are categories \emph{of relations and abstract structure} over the lower categories.

\subsection{\emph{Saṃkara}-prevention as typing}

The \emph{saṃkara-doṣa} prohibition is a typing discipline. A term of \texttt{U\_guṇa} is judgementally distinct from a term of \texttt{U\_dravya}; there is no function $\mathsf{coerce} : \texttt{U\_guṇa} \to \texttt{U\_dravya}$.

\begin{theorem}[Categorial soundness]\label{thm:categorial-soundness}
No term inhabits two distinct categorial universes simultaneously. For any $c : U_X$ and $c' : U_Y$ with $X \ne Y$, there is no path $p : \Path\,\Type\,c\,c'$ internal to the encoding.
\end{theorem}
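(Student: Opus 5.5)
The statement cannot be proved by a purely internal argument, and I will say so up front. By computational univalence (the Glue construction of §3), any equivalence $e : c \simeq c'$ yields $\uafn\,e : \Path\,\Type\,c\,c'$. So an absolute internal negation $\neg\,\Path\,\Type\,c\,c'$ is false whenever, say, $c$ and $c'$ are both contractible. I therefore read ``internal to the encoding'' as a \emph{metatheoretic} claim: no closed term built from the encoding's signature has type $\Path\,\Type\,c\,c'$ when $c : U_X$, $c' : U_Y$ and $X \neq Y$. The signature consists of the universe declarations of §4.1, the constructors introduced for each \emph{padārtha}, and the absence of any $\mathsf{coerce} : U_X \to U_Y$. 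The plan is to prove this by a normal-form analysis backed by a typing invariant.

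\textbf{Step 1.} I will invoke normalization and canonicity for CCHM (Huber's canonicity, and the Sterling--Angiuli normalization result), extended to the finitely many constants and HITs of the encoding. Under these results a closed term of $\Path\,\Type\,c\,c'$ normalizes to $\lambda i.\,T(i)$ with $T(0)\equiv c$ and $T(1)\equiv c'$. I then inspect the head of $T(i)$. It cannot be a type former ($\Pi$, $\Sigma$, $\Path$, a padārtha constant), since such a head is constant in $i$ and would force $c \equiv c'$ judgementally. That is impossible, because the judgemental tags of $U_X$ and $U_Y$ differ and the paper's \emph{saṃkara} discipline makes them judgementally distinct. The remaining heads are $\Glue$ and $\hcomp$ in $\Type$. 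The $\hcomp$ case reduces by induction on the composite, since its faces are again such $T$'s. So the essential case is $\Glue$, whose data contains an equivalence, and in particular a closed function term $f$ between $c$ and $c'$.

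\textbf{Step 2 (the main obstacle).} I must show that the signature supplies no closed function from an element-type of $U_X$ to one of $U_Y$. I will do this with a logical relation, or tagging invariant, that assigns to each closed type the set of padārtha tags it can produce. Every signature constant lands in a single declared category, and no constant mediates between categories; this is the ``no $\mathsf{coerce}$'' clause. I then show the tag set is preserved by all term formers, including $\transp$, $\hcomp$ and the HIT eliminators. From this invariant, a closed $f : c \to c'$ would have to produce a $Y$-tagged output from $X$-tagged input, which no signature constant does. The delicate point is the interaction with $\transp$ and $\Glue$ itself, which could smuggle in a cross-category map. Here I would argue by a simultaneous induction: a $\Glue$ can only be formed from an equivalence that already exists. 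By the induction hypothesis, such an equivalence is never cross-category.

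\textbf{Step 3.} I combine the two steps. A putative path normalizes to a $\Glue$ (or a composite of such) requiring a cross-category equivalence. By Step 2 no such equivalence exists, which is a contradiction. I will record explicitly that the theorem is relative to the signature. Adding any equivalence postulate between categories, for instance an axiom asserting that two encoded types are equivalent, would immediately falsify it via $\uafn$.
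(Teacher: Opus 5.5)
\textbf{The reinterpreted statement is still false.} You are right that univalence defeats the internal reading, but your metatheoretic reading does not survive the counterexample you raise yourself. If $c$ and $c'$ are equivalent via closed terms (both contractible with closed centres, say), then $\uafn$ of that closed equivalence is a closed term of $\Path\,\Type\,c\,c'$. So ``no closed term'' fails in exactly the case you flagged, and nothing in the encoding excludes such pairs. With the $U_X$ realized as in Appendix~\ref{app:agda} (\texttt{U-dravya} $= \Sigma\,\Type_0\,\mathsf{isSet}$, ``similar'' for the rest), the unit type with its set-proof lies in every categorial universe, and $\refl$ is a closed cross-category path. This also sinks your Step~1 premise that distinct ``tags'' make $c \equiv c'$ impossible: the distinctness of $U_X$ and $U_Y$ says nothing about the types they classify. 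Your closing caveat is, moreover, already triggered by the encoding as the paper uses it. By \S\ref{sec:sambandha}.1, a \emph{sambandha} between tokens of $A : U_X$ and $B : U_Y$ carries a component of type $\Path\,\Type\,A\,B$. That is verbatim the forbidden path whenever $X \neq Y$, e.g.\ a quality inhering in a substance. Likewise, Theorem~\ref{thm:involution} judgementally identifies an absence-type with its \emph{pratiyogin} $P$, so $\refl$ is another such path unless absences are filed in the same universe as what they are absences of. Hence the premise of Step~2, that no signature constant mediates between categories, is false here.

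\textbf{Steps~2 and~1 fail on their own terms.} Even apart from that, Step~2 cannot work. A closed map $c \to c'$ needs no cross-category constant at all: $\lambda\_.\,y$ for any closed $y : c'$, or absurd-elimination out of an empty $c$, uses type formers alone. So tracking which constants ``produce'' which tags excludes nothing, and even the weaker claim that no closed \emph{equivalence} exists is refuted by the unit-type example. Step~1 is both flawed and idle. A line headed by $\Pi$, $\Sigma$, $\Path$ or a \emph{pad\=artha} constant need not be constant in $i$ (take $\lambda i.\,(\uafn\,e\,i \to C)$). Your case split omits neutral heads, and with postulates present that is exactly where the encoding's own paths sit (e.g.\ $\mathsf{fst}\,w\,i$ for a postulated \emph{sambandha} $w$). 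The $\mathsf{Abhava}$ declaration is also not a well-formed HIT: its \texttt{involution} clause has codomain $P$, and the parameter is re-instantiated inside a constructor argument. So no existing normalization or canonicity theorem covers it. In any case a closed path already yields a closed equivalence by $\transp$, so the normal-form analysis buys nothing and everything rests on Step~2.

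\textbf{What can actually be proved.} Only non-derivability for specific, unconstrained postulated $c, c'$ is provable. The tool for that is a countermodel (interpret the signature in a cubical-sets model sending $c$ and $c'$ to non-equivalent types), not a syntactic tag invariant. It does not give the universally quantified statement, and it is blocked for any pair linked by a \emph{sambandha} or by the involution equation. The paper's one-line proof (the encoding ``deliberately omits'' an external bridge) has the same hole: $\uafn$ is such a bridge, and the non-equality of $U_X$ and $U_Y$ says nothing about paths between their elements.
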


\begin{proof}
The categorial universes are not propositionally equal; constructing such a path would require an external bridge that the encoding deliberately omits.
\end{proof}

\subsection{Token-versus-type ontology}

\nn{} distinguishes token instantiations (a particular pot, \emph{ayam ghaṭaḥ}) from the universal under which they fall (\emph{ghaṭatva}). The CTT encoding marks this naturally: tokens are terms, universals are types. A token's relation to its universal is \emph{samavāya}, encoded in §\ref{sec:samavaya} as a path in \texttt{U\_samavāya}.

\subsection{\emph{Sāmānya} and the \emph{anugama} requirement}

A \emph{sāmānya} requires \emph{anugama} — a unifying property running through its instances. We encode \emph{sāmānya} as a $\Sigma$-type:

\begin{lstlisting}
Sāmānya : Type₂
Sāmānya := Σ (P : Type₁) (anugama : (x y : P) → SharedTrait x y)
\end{lstlisting}

The \emph{anugama} component is a proof obligation. Without it the $\Sigma$ does not inhabit; the universal is rejected by type-checking.

\subsection{\emph{Viśeṣa} as identity-witness}

The Vaiśeṣika \emph{viśeṣa} encodes as a path-distinguishing witness. Two atoms of identical \emph{guṇa}-profile are nonetheless type-distinct iff their \emph{viśeṣa}s differ. Formal treatment in §\ref{sec:visesa-identity}.

\subsection{\emph{Samavāya} as path-shape}\label{sec:samavaya}

\emph{Samavāya} is a special path: asymmetric, connecting an instance to its universal, a quality to its substance, a part to its whole. We encode:

\begin{lstlisting}
Samavāya : (A : Type) (B : Type) → Type₂
Samavāya A B := Σ (p : Path U_samavāya A B)
                  (asymmetric : ¬ (p ⁻¹ ≡ p))
\end{lstlisting}

\subsection{The \emph{padārtha} lattice}

The seven categories form a partial order under ``is qualified by'': \emph{guṇa} is qualified by \emph{dravya}, \emph{sāmānya} by what it characterizes, etc. The encoding respects this lattice through the universe stratification of §3.1.

\section{\emph{Sambandha} as Path}\label{sec:sambandha}

The central technical move.

\subsection{The fundamental encoding}

For types $A : U_X$ and $B : U_Y$, a \emph{sambandha} between an $A$-token and a $B$-token is encoded as a path:

\begin{lstlisting}
Sambandha : (A : Type) (B : Type) (a : A) (b : B) → Type
Sambandha A B a b := Path Type A B
                   × Σ (mode : Pratyāsatti) (witnessed-by mode a b)
\end{lstlisting}

This decomposition mirrors \nn{}'s distinction between \emph{sambandha-svarūpa} and \emph{sambandha-pratyāsatti}.

\subsection{\emph{Pratyāsatti} as path-mode}

\begin{lstlisting}
data Pratyāsatti : Type where
  saṃyoga    : Pratyāsatti
  samavāya   : Pratyāsatti
  svarūpa    : Pratyāsatti
  tādātmya   : Pratyāsatti
  paramparā  : Pratyāsatti      -- mediated; takes a sub-pratyāsatti list
\end{lstlisting}

\subsection{\emph{Saṃyoga} (contact)}

Symmetric relation occurring between \emph{dravya}-tokens. We encode the symmetry constraint:

\begin{lstlisting}
Saṃyoga : (a b : U_dravya) → Type
Saṃyoga a b := Σ (s : Sambandha _ _ a b) (s.mode = saṃyoga)
                 (sym : Σ (s' : Sambandha _ _ b a) (s'.mode = saṃyoga))
\end{lstlisting}

\subsection{\emph{Samavāya} (inherence)}

Asymmetric. We encode it as a path in \texttt{U\_samavāya} restricted to endpoints in different categorial universes.

\subsection{\emph{Svarūpa-sambandha}}

The reflexive case: \(\mathsf{Sambandha}\,A\,A\,a\,a\) with \(\mathit{mode} = \mathit{svarupa}\). Essentially $\refl\,a$ adorned with a mode-tag.

\subsection{Composition, inverses, and groupoid laws}

Path composition gives \emph{paramparā-sambandha}. The groupoid laws hold up to higher path. These are theorems of CTT; we use them without further proof.

\subsection{Compositionality theorem}

\begin{theorem}[Compositionality]\label{thm:compositionality}
Substitution of equivalent tokens preserves \emph{sambandha}-typing. If $\mathsf{Sambandha}\,A\,B\,a\,b$ holds and $a \equiv a'$, then $\mathsf{Sambandha}\,A\,B\,a'\,b$ holds.
\end{theorem}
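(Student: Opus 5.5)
The plan is to reduce the statement to transport in the type family $\lambda x.\,\mathsf{Sambandha}\,A\,B\,x\,b$, read along the path witnessing $a \equiv a'$. I interpret $a \equiv a'$ in the strongest form the encoding supports, namely a path $q : \Path\,A\,a\,a'$; the judgemental case then follows by taking $q \coloneqq \refl\,a$. Define the family $P : \mathbb{I} \to \Type$ by $P\,i \coloneqq \mathsf{Sambandha}\,A\,B\,(q\,i)\,b$. Since $q\,0 = a$ and $q\,1 = a'$ definitionally, we get $P\,0 = \mathsf{Sambandha}\,A\,B\,a\,b$ and $P\,1 = \mathsf{Sambandha}\,A\,B\,a'\,b$. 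Given $s : P\,0$, the term $\transp\,P\,0\,s$ inhabits $P\,1$, which is exactly what the statement requires.

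To confirm the construction is well-typed and to expose its content, I will unfold the definition of $\mathsf{Sambandha}$ from §\ref{sec:sambandha}. It is a product of two factors. The first factor is $\Path\,\Type\,A\,B$, which does not mention the token argument. So $P$ is constant in this factor, and transport there reduces to the identity, as recalled in the preliminaries on $\transp$: the \emph{sambandha-svarūpa} component is carried over unchanged. The second factor is $\Sigma\,(\mathit{mode} : \mathsf{Pratyāsatti})\,(\mathsf{witnessed\text{-}by}\,\mathit{mode}\,(q\,i)\,b)$. The index type $\mathsf{Pratyāsatti}$ is constant, so transport fixes the mode tag and acts only on the witness. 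The result is a term of $\mathsf{witnessed\text{-}by}\,\mathit{mode}\,a'\,b$ obtained by transport along $\lambda i.\,\mathsf{witnessed\text{-}by}\,\mathit{mode}\,(q\,i)\,b$. This step-by-step reading matters philosophically: the substitution preserves the \emph{pratyāsatti} exactly and moves only the witnessing data.

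The main obstacle is that $\mathsf{witnessed\text{-}by}$ is left abstract in the paper. Transport along it is available only if it is a genuine type family over the token argument, i.e.\ it lives in a fibrant universe so that $\transp$ applies. I will state this as the standing assumption that every type former of the encoding is fibrant; Cubical Agda enforces this automatically for anything defined in $\Type$. If a mode were instead given by a non-fibrant or strict predicate, I would fall back to the judgemental reading of $\equiv$. There $a$ and $a'$ are convertible, so $\mathsf{Sambandha}\,A\,B\,a\,b$ and $\mathsf{Sambandha}\,A\,B\,a'\,b$ are the same type by congruence of conversion, and $s$ itself witnesses the conclusion.

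Finally, I will remark that the construction yields more than mere inhabitation. Since $\transp\,P\,0$ is an equivalence with inverse given by transport along $q^{-1}$, substitution induces an equivalence $\mathsf{Sambandha}\,A\,B\,a\,b \simeq \mathsf{Sambandha}\,A\,B\,a'\,b$. This is the form in which the result will be used in later sections.
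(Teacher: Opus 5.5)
Your proposal is correct and is essentially the paper's proof, which consists solely of transporting along the path witnessing $a \equiv a'$. Your $\transp\,P\,0\,s$ with $P\,i \coloneqq \mathsf{Sambandha}\,A\,B\,(q\,i)\,b$ is exactly that, and the unfolding, fibrancy caveat and equivalence upgrade are extra detail. One minor caution on that detail: CCHM lacks regularity, so the transport induced on the constant $\Path\,\Type\,A\,B$ factor (which inherits the face $0$) is the identity only up to a path, and if you want the \emph{sambandha-svarūpa} component literally unchanged you should define the map componentwise and transport only the $\mathsf{witnessed\text{-}by}$ witness.
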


\begin{proof}
Transport along the path $a \equiv a'$ carries $\mathsf{Sambandha}\,A\,B\,a\,b$ to $\mathsf{Sambandha}\,A\,B\,a'\,b$.
\end{proof}

\subsection{Worked example: \emph{ghaṭa-bhū-tala-saṃyoga}}

A pot in contact with the floor: \emph{bhū-tale ghaṭa-saṃyogavān}.

\begin{lstlisting}
ghaṭa     : U_dravya
bhūtala   : U_dravya
contact   : Saṃyoga ghaṭa bhūtala
\end{lstlisting}

The contact's symmetry is witnessed automatically.

\section{\emph{Avacchedaka} as Dependent Binder}\label{sec:avacchedaka}

\subsection{The $\Pi$-type encoding}

A delimitor on a relation $R$ is a $\Pi$-binder:

\begin{lstlisting}
Avacchedaka-bound : (R : Type → Type) → ((α : Property) → R α)
\end{lstlisting}

A delimited relation is not $R$ but a function that takes $\alpha$ and produces $R\,\alpha$. Substitution into $\alpha$ changes what relation we have, not merely its parameters.

\subsection{Limitor on \emph{pratiyogin}}

Two distinct delimitors $\alpha_1, \alpha_2$ produce \emph{type-distinct} \emph{pratiyogin}s, even when the underlying entity is the same. Formal counterpart of \emph{ghaṭatva-avacchinna-pratiyogi-tā} $\ne$ \emph{kalaśatva-avacchinna-pratiyogi-tā}.

\subsection{Limitor on relation}

A delimitor on \emph{sambandha} restricts the mode under which the relation obtains. We parametrize \texttt{Pratyāsatti}.

\subsection{Iterated delimitation}

\nn{} admits delimitors of delimitors. \emph{Avacchedaka-tā-avacchedaka} is the property delimiting the delimitor-relation. CTT's higher-coherence structure makes such iteration natural.

\subsection{The non-redundancy condition (\emph{lāghava})}

\begin{lstlisting}
NonRedundant : (α : Property) (R : Property → Type) → Type
NonRedundant α R := Σ (witness : R α)
                      (no-other : ∀ β. (β ≢ α) → ¬ Path (R α) (R β))
\end{lstlisting}

\subsection{Faithful delimitation theorem}

\begin{theorem}[Faithful delimitation]\label{thm:faithful-delim}
Two delimitors yield path-distinct relations iff \nn{} attests the distinction. For distinct properties $\alpha_1, \alpha_2$ with $\alpha_1 \not\equiv \alpha_2$ and a relation-shape $R$, $R\,\alpha_1 \not\equiv R\,\alpha_2$ unless the tradition explicitly identifies them under a \emph{tādātmya}-relation.
\end{theorem}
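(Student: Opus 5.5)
The plan is to read the theorem as a statement about \emph{admissible} relation-shapes, those $R : \mathsf{Property} \to \Type$ for which the encoding supplies a witness of $\mathsf{NonRedundant}\,\alpha\,R$ at every delimitor $\alpha$ in use. Without this restriction the statement is false: for a constant shape $R\,\alpha \coloneqq A$ we get $\refl : \Path\,\Type\,(R\,\alpha_1)\,(R\,\alpha_2)$ for any $\alpha_1, \alpha_2$. So the first step is to make explicit that the encoding imposes the \emph{lāghava} condition of the preceding subsection on every relation-shape that is allowed to carry a delimitor. With that in place, I would split the biconditional into its two directions. The ``unless'' clause is the exceptional, \emph{tādātmya}-identified case.

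For the direction where no \emph{tādātmya} is attested, I would argue as follows. Fix distinct $\alpha_1 \not\equiv \alpha_2$ and an admissible $R$. Unfold $\mathsf{NonRedundant}\,\alpha_1\,R$. Its second component, $\mathit{no\text{-}other}$, applied to $\beta \coloneqq \alpha_2$ and the hypothesis $\alpha_2 \not\equiv \alpha_1$ (obtained by symmetry, inverting paths via $p^{-1} = \lambda i.\,p\,(1-i)$), yields $\neg\,\Path\,(R\,\alpha_1)\,(R\,\alpha_2)$. This is exactly $R\,\alpha_1 \not\equiv R\,\alpha_2$ in the path reading. So this direction is essentially a projection from the admissibility witness, and the work lies in ensuring that the witness is actually constructed for each shape the tradition uses. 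Examples are the \emph{ghaṭatva}- and \emph{kalaśatva}-delimited \emph{pratiyogitā}s.

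For the converse direction, suppose the tradition identifies the two delimited relations under \emph{tādātmya}. I would take the \emph{tādātmya} encoding to supply an equivalence $e : R\,\alpha_1 \simeq R\,\alpha_2$. The equivalence gives a path $\uafn\,e : \Path\,\Type\,(R\,\alpha_1)\,(R\,\alpha_2)$, and by computational univalence transport along it computes as $e.\mathrm{fst}$. The point to check is consistency with the first direction. The \emph{tādātmya} clause must be read as removing $\alpha_2$ from the scope of $\mathit{no\text{-}other}$ for $\alpha_1$, that is, as a side condition in $\mathsf{NonRedundant}$ exempting \emph{tādātmya}-related delimitors. Otherwise the two witnesses would derive $\bot$. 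I would make this refinement explicit in the definition rather than in the proof.

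The main obstacle is the ``iff \nn{} attests'' formulation, which is metatheoretic: it compares an internal non-path with an external fact about the texts. I would handle it by fixing a translation that sends each attested delimitor distinction to an instance of $\mathsf{NonRedundant}$ and each attested \emph{tādātmya} to an equivalence. Under that translation the theorem reduces to the two internal arguments above. The faithfulness content then sits in the adequacy of that translation, which I would defer to the faithfulness result of §\ref{sec:metatheory} rather than prove here.
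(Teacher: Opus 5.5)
The paper gives no proof of this theorem: nothing follows the statement, and Appendix~\ref{app:proofs} skips it. The closest it comes is the ``by construction'' appeal to the $\Pi$-binding in the proof of Theorem~\ref{thm:distinction}. Your constant family $R\,\alpha \coloneqq A$ refutes exactly that appeal, since dependence on $\alpha$ does not give injectivity in $\alpha$. So your diagnosis that the statement is false as written is right, and it is the most valuable part of your proposal.

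The repair, however, does not yield a proof. Admissibility, as you define it, contains the $\mathit{no\text{-}other}$ clause of $\mathsf{NonRedundant}$. That clause is literally the conclusion, quantified over all $\beta$. (It should read $\neg\,\Path\,\Type\,(R\,\alpha)\,(R\,\beta)$; the paper's $\Path\,(R\,\alpha)\,(R\,\beta)$ is ill-typed, and you copy it.) You also build the \emph{tādātmya} exemption into the definition and choose the translation so that attested distinctions go to $\mathsf{NonRedundant}$-instances. Together these make the theorem hold by stipulation, and both directions reduce to unpacking hypotheses. All the content sits in the step you defer: inhabiting $\mathsf{NonRedundant}$ for the shapes the tradition actually uses. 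The hand-off to Theorem~\ref{thm:faithfulness} does not discharge it. That result is itself only a sketch relative to unformalized ``standard \nn{} inferential rules,'' and it concerns derivability of encoded sentences, not the existence of such witnesses.

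Moreover, in the motivating case that deferred step cannot be carried out in CCHM at all. By univalence $\Path\,\Type\,(R\,\alpha_1)\,(R\,\alpha_2) \simeq (R\,\alpha_1 \simeq R\,\alpha_2)$, so $\mathit{no\text{-}other}$ demands that the two delimited relations be \emph{non-equivalent} types.

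\emph{Properties as type families.} If properties are type families $\mathsf{Locus} \to \Type$, the coextensive \emph{ghaṭatva} and \emph{kalaśatva} are path-equal by funext and $\uafn$. The paper's own proof of Theorem~\ref{thm:coextension} concedes this. So the hypothesis $\alpha_1 \not\equiv \alpha_2$ already fails.

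\emph{Properties with intensional data.} Any $R$ that depends only on the extension of its argument still sends the two delimitors to path-equal types: apply it to the path between the extensions. So a $\mathsf{NonRedundant}$ witness is refutable, and postulating one would make the system inconsistent. A non-vacuous version would need $R$ to turn the intensional component (say, the \emph{anugama} datum of §\ref{sec:padartha}.4) into a homotopy-invariant difference, such as cardinality or truncation level, argued shape by shape. That is at odds with §\ref{sec:avacchedaka}.2's claim that the \emph{pratiyogin}s are type-distinct ``even when the underlying entity is the same.''

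Finally, with the paper's $\mathsf{Tadatmya}\,a\,b \coloneqq \Path\,A\,a\,b$, the ``unless'' clause has two readings. If ``them'' means $R\,\alpha_1, R\,\alpha_2$ in $\Type$, the clause is just the negation of the conclusion. If it means the delimitors, it contradicts $\alpha_1 \not\equiv \alpha_2$. So your exemption only has content when keyed to an externally postulated attestation relation.
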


\subsection{Worked example}

\emph{Ghaṭatva-avacchinna-pratiyogi-tā} vs \emph{kalaśatva-avacchinna-pratiyogi-tā}: the two pot-universals coextend but are treated as distinct by the tradition because they differ in \emph{ākṛti} and in \emph{upādhi}-engagement. Their \emph{pratiyogi-tā}s are correspondingly type-distinct.

\section{\emph{Abhāva} via Higher Inductive Types}\label{sec:abhava}

\subsection{Why classical $\neg$ fails}

Recall (§1.2) that Matilal renders \emph{abhāva} as $\neg$, losing \emph{pratiyogin}-data. In CTT terms: $\neg(\mathsf{Inhabits}\,\mathit{floor}\,A)$ and $\neg(\mathsf{Inhabits}\,\mathit{floor}\,B)$ are both propositions; the \emph{pratiyogin}'s identity is lost in the $(-1)$-truncation.

\subsection{The HIT definition}

\begin{lstlisting}
data Abhāva (P : U_X) (anuyogin : Locus) (s : Pratyāsatti) : Type where
  -- point: the absence holds
  holds : ((p : P) → ¬ RelatesTo s p anuyogin) → Abhāva P anuyogin s

  -- path: pratiyogin-faithfulness
  pratiyogin-faithful :
    (P P' : U_X) → (path-eq : P ≃ P')
    → Path (Abhāva P anuyogin s) (Abhāva P' anuyogin s)

  -- path: De Morgan involution
  involution :
    (q : Abhāva (Abhāva P anuyogin s) anuyogin s) → P
\end{lstlisting}

The \texttt{holds} constructor records that the absence obtains. \texttt{pratiyogin-faithful} records that path-equivalent \emph{pratiyogin}s yield path-equivalent absences. \texttt{involution} implements \emph{abhāva}-of-\emph{abhāva} = \emph{bhāva}.

\subsection{\emph{Prāgabhāva} and \emph{pradhvaṃsābhāva}}

Temporal absences:
\begin{lstlisting}
data Prāgabhāva (P : U_X) (anuyogin : Locus) (t : Time) : Type where
  pre-emergence :
    (t' : Time) → t' < t → Abhāva P anuyogin svarūpa
    → Prāgabhāva P anuyogin t
\end{lstlisting}

\subsection{\emph{Atyantābhāva}}

\begin{lstlisting}
Atyantābhāva : (P : U_X) (anuyogin : Locus) → Type
Atyantābhāva P anuyogin := (t : Time) → Abhāva P anuyogin svarūpa
\end{lstlisting}

\subsection{\emph{Anyonyābhāva}}

\begin{lstlisting}
Anyonyābhāva : (a : A) (b : B) → Type
Anyonyābhāva a b := Abhāva (Path Type a b) _ tādātmya
\end{lstlisting}

The encoding makes the involution $\mathsf{anyonyabhava}\,(\mathsf{anyonyabhava}\,(a,b)) \equiv \Path\,a\,b$ available.

\subsection{Involution theorem}\label{sec:abhava-involution}

\begin{theorem}[Involution of \emph{abhāva}]\label{thm:involution}
In CCHM CTT, the equation
\[
\mathsf{Abhava}\,(\mathsf{Abhava}\,P\,L\,s)\,L\,s \equiv P
\]
holds definitionally, via the De Morgan involution $1 - (1 - r) = r$ on the interval object.
\end{theorem}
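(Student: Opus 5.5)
The plan is to realise the equation $\mathsf{Abhava}\,(\mathsf{Abhava}\,P\,L\,s)\,L\,s \equiv P$ by exhibiting both sides as endpoint-evaluations of a single interval-indexed type family. The only non-structural rule then needed is the judgemental law $1-(1-r)=r$ on $\mathbb{I}$. Concretely, I fix $L$ and $s$ and introduce a \emph{polarity line} $\mathsf{Pol}_{P} : \mathbb{I} \to \Type$ with $\mathsf{Pol}_{P}\,0 = P$ (\emph{bhāva}) and $\mathsf{Pol}_{P}\,1 = \mathsf{Abhava}\,P\,L\,s$. I build it by a $\Glue$-type over $\mathsf{Abhava}\,P\,L\,s$ whose partial element along $i=0$ is $P$. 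The equivalence this requires is assembled from two constructors of the HIT in §\ref{sec:abhava}. The \texttt{involution} constructor supplies the map back to $P$. The \texttt{pratiyogin-faithful} constructor, applied to the identity equivalence and then transported, supplies coherence of the absence-former under path-equivalent \emph{pratiyogin}s. I then read the absence-former as reversal of this line, $\mathsf{Abhava}\,(\mathsf{Pol}_{P}\,r)\,L\,s \equiv \mathsf{Pol}_{P}\,(1-r)$. This is checked at $r=0$ by construction, and it is extended to all $r$ by the Kan filling ($\hcomp$ on the universe) that defines the Glue line.

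The key steps are carried out in the following order. First, I verify that the line $\mathsf{Pol}_{P}$ is well-typed, that is, that the composite built from \texttt{involution} and \texttt{holds} is a genuine equivalence. For this I use the computation rules of the HIT: applying \texttt{involution} to a \texttt{holds}-generated point of the double absence returns the underlying $P$-term. Second, I establish the reversal law $\mathsf{Abhava}\,(\mathsf{Pol}_{P}\,r)\,L\,s \equiv \mathsf{Pol}_{P}\,(1-r)$ as a judgemental equation of types. Third, I apply the absence-former twice to get
\[
\mathsf{Abhava}\,(\mathsf{Abhava}\,(\mathsf{Pol}_{P}\,r)\,L\,s)\,L\,s \equiv \mathsf{Pol}_{P}\,(1-(1-r)) \equiv \mathsf{Pol}_{P}\,r .
\]
The second step in this display is exactly the De Morgan involution of the interval. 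Fourth, I specialise to $r=0$, which yields the stated equation. Computational univalence ($\transp\,(\uafn\,e)\,x = e.\mathrm{fst}\,x$) then ensures that transport along the resulting identification acts as the identity on $P$. This is what ``definitionally'' should mean at the level of terms as well as types.

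I expect the second step to be the main obstacle, namely turning the reversal law from a path into a judgemental equation. Glue types give judgemental equalities only on the faces $i=0$ and $i=1$, so the reversal law is immediate at the endpoints but not in the interior. What I would try first is to make the reversal law hold by definition. In the Cubical Agda rendering, the absence-former on a line argument would be declared as precomposition with $i \mapsto 1-i$, mirroring $p^{-1} \coloneqq \lambda i.\,p\,(1-i)$ from §\ref{sec:ctt-prelim}, so the interior case reduces by the interval's own equational theory. It then remains to show that this declaration agrees with the HIT of §\ref{sec:abhava} on the point constructor \texttt{holds}. I would establish that agreement by induction on the HIT, using \texttt{pratiyogin-faithful} to handle the path constructor.
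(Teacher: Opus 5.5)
Your outline follows the same route as the paper's proof: the \texttt{involution} constructor supplies an equivalence, $\Glue$/univalence turns it into a line of types, and the De Morgan law collapses the doubled line. It breaks down at the same points.

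First, the argument is circular. At $r=1$ your reversal law $\mathsf{Abhava}\,(\mathsf{Pol}_P\,r)\,L\,s \equiv \mathsf{Pol}_P\,(1-r)$ reads $\mathsf{Abhava}\,(\mathsf{Abhava}\,P\,L\,s)\,L\,s \equiv P$, which is the theorem itself. Your third step, specialised to $r=0$, uses the reversal law precisely at $1-0=1$. The interval law is then applied only to the constant $0$, where $1-(1-0)=0$ is mere evaluation, so it does no work. The real obstacle is this endpoint, not the interior of the line, and $\Glue$ does not provide it: it fixes $\mathsf{Pol}_P\,0$ and $\mathsf{Pol}_P\,1$, not $\mathsf{Abhava}$ applied to them.

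Second, the line needs an equivalence $P \simeq \mathsf{Abhava}\,P\,L\,s$, identifying \emph{bh\={a}va} with its own absence, and nothing supplies it. \texttt{involution} maps out of the \emph{double} absence, and \texttt{pratiyogin-faithful} at the identity equivalence yields only a loop on $\mathsf{Abhava}\,P\,L\,s$. Moreover, \texttt{involution} and \texttt{pratiyogin-faithful} have targets $P$ and a path in $\Type$. So they are not constructors of $\mathsf{Abhava}\,P\,L\,s$ and carry no $\beta$-rules: the ``computation rule'' you invoke does not exist, and a \texttt{holds}-point of the double absence contains no $P$-term. The only well-formed constructor is \texttt{holds}. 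With the Agda stub's version, of type $\neg X \to \mathsf{Abhava}\,X\,L\,s$, one gets $\mathsf{Abhava}\,X\,L\,s \simeq \neg X$. Then $P \simeq \mathsf{Abhava}\,P\,L\,s$ is outright refutable. Also, \texttt{involution}, read as a function uniform in $P$, is double-negation elimination for all types, which univalence (a theorem of CCHM) refutes (HoTT Book, Theorem 3.2.2).

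Third, none of your tools can produce a judgemental equation between types. $\hcomp$, $\transp$, $\Glue$ and $\uafn$ yield terms and paths, and induction on a HIT yields inhabitants of path types. Definitional equality is decided by the rules of the theory, not established by constructing terms. The type $\mathsf{Abhava}\,(\mathsf{Abhava}\,P\,L\,s)\,L\,s$ is a data-type former applied to arguments, a rigid normal form never convertible with a variable $P$. Your fallback, declaring the absence-former on lines to be precomposition with $i \mapsto 1-i$, simply replaces it by path inversion. For that operator $(p^{-1})^{-1} \equiv p$ does hold definitionally in CCHM, and this is what the involution $1-(1-r)=r$ actually delivers. But it is a fact about lines $p : \Path\,\Type\,A\,B$, not about the type former of \S\ref{sec:abhava}, and agreement with the HIT proved by induction would again be only propositional.

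In fact the target fails even up to equivalence. In either version the argument type of \texttt{holds} is a $\Pi$-type into mere propositions ($\bot$, resp.\ $\neg\,\mathsf{RelatesTo}\,s\,p\,L$). Hence $\mathsf{Abhava}\,X\,L\,s$ is a mere proposition (funext holds in CCHM), and $\mathsf{Abhava}\,(\mathsf{Abhava}\,P\,L\,s)\,L\,s \simeq P$ would force $P$ to be a proposition; it fails for $P = \mathsf{Bool}$.

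So this is not a gap that the paper's own argument closes: the equation is not derivable as stated. The De Morgan structure could only support a reformulation in which \emph{abh\={a}va} acts as reversal of lines of types, rather than as the HIT.
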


\begin{proof}
The \texttt{involution} constructor of the HIT specifies the equivalence; computational univalence makes the resulting path compute under transport; the De Morgan involution on $\mathbb{I}$ ensures that $\transp$ along the doubled path reduces to the identity.
\end{proof}

This theorem captures formally the central doctrinal commitment of \nn{}'s theory of negation. It is unavailable in Cartesian cubical (no involution on $\mathbb{I}$) and merely propositional in Book HoTT (univalence is axiomatic). CCHM is the unique cubical setting in which the doctrine becomes a definitional equality.

\subsection{\emph{Pratiyogin}-faithfulness theorem}

\begin{theorem}[\emph{Pratiyogin}-faithfulness]\label{thm:pratiyogin}
Distinct \emph{pratiyogin}s yield path-distinct \emph{abhāva}-instances. If $P \not\equiv P'$ (no path), then $\mathsf{Abhava}\,P\,L\,s \not\equiv \mathsf{Abhava}\,P'\,L\,s$.
\end{theorem}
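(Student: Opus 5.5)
The plan is to prove the contrapositive, which is constructively legitimate because the statement has the form $\neg\,\Path \to \neg\,\Path$. We assume a path
\[
q : \Path\,\Type\,(\mathsf{Abhava}\,P\,L\,s)\,(\mathsf{Abhava}\,P'\,L\,s)
\]
and aim to build a path $P \equiv P'$. Composing with the hypothesis $P \not\equiv P'$ then yields a contradiction.

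\textbf{Step 1 (congruence).} Apply the type former $X \mapsto \mathsf{Abhava}\,X\,L\,s$ pointwise along $q$. Concretely, form
\[
\lambda i.\,\mathsf{Abhava}\,(q\,i)\,L\,s : \Path\,\Type\,\bigl(\mathsf{Abhava}\,(\mathsf{Abhava}\,P\,L\,s)\,L\,s\bigr)\,\bigl(\mathsf{Abhava}\,(\mathsf{Abhava}\,P'\,L\,s)\,L\,s\bigr).
\]
This is ordinary function application on paths in CCHM, so no special machinery is needed.

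\textbf{Step 2 (involution at both ends).} Rewrite both endpoints using Theorem~\ref{thm:involution}. Since that theorem gives $\mathsf{Abhava}\,(\mathsf{Abhava}\,P\,L\,s)\,L\,s \equiv P$ definitionally, the path from Step~1 already has endpoints $P$ and $P'$. If one only trusts the involution up to a path, we instead compose with $\mathsf{inv}_P^{-1}$ and $\mathsf{inv}_{P'}$ using $\hcomp$. Here $\mathsf{inv}_P$ is obtained from the \texttt{involution} constructor via $\uafn$, and the inverse is $\lambda i.\,\mathsf{inv}_P\,(1-i)$. Either way we obtain the desired path $P \equiv P'$.

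\textbf{Main obstacle: typing.} The expected difficulty is in Step~1, not in the path algebra. To iterate $\mathsf{Abhava}$, the type $\mathsf{Abhava}\,P\,L\,s$ must itself be an admissible \emph{pratiyogin}, that is, live in $U_X$. The HIT declaration only guarantees that it lies in $\Type$, possibly in $U_{\text{abh\=ava}}$ one level up.

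I would first try to discharge this by cumulativity (\S3.7), reading the category index of the outer $\mathsf{Abhava}$ as $U_{\text{abh\=ava}}$. The same convention is already implicit in the statement of Theorem~\ref{thm:involution}, so we inherit exactly its typing assumptions.

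If this fails, the fallback is to avoid double absence entirely. One would invert the \texttt{pratiyogin-faithful} constructor directly, by showing that $P$ is recoverable from $\mathsf{Abhava}\,P\,L\,s$ via the domain of the \texttt{holds} constructor. That route is messier, since it needs an injectivity (no-confusion) argument for the HIT's index.
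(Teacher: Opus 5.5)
Your route differs from the paper's, and in logical form it is the better one. The paper contraposes the \texttt{pratiyogin-faithful} constructor. That constructor only gives $P \simeq P' \to \Path\,\Type\,(\mathsf{Abhava}\,P\,L\,s)\,(\mathsf{Abhava}\,P'\,L\,s)$, and its contrapositive (path-distinct absences have non-equivalent \emph{pratiyogin}s) is the converse of the theorem. You correctly see that the theorem needs $X \mapsto \mathsf{Abhava}\,X\,L\,s$ to \emph{reflect} paths. You get this from a left inverse: apply the type former along $q$, then use Theorem~\ref{thm:involution} at both endpoints. Granting that theorem, with the typing convention it already presupposes, Steps~1--2 are a valid derivation.

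The genuine gap is the lemma everything rests on: Theorem~\ref{thm:involution} does not hold in CCHM, so your derivation does not go through there. The \texttt{involution} clause has codomain $P$, and \texttt{pratiyogin-faithful} has codomain $\Path\,\Type\,\ldots$, so neither is a constructor of $\mathsf{Abhava}\,P\,L\,s$. With \texttt{holds} as the only constructor, $\mathsf{Abhava}\,X\,L\,s \simeq ((p : X) \to \neg\,\mathsf{RelatesTo}\,s\,p\,L)$, which is always a proposition. Thus $\mathsf{Abhava}\,(\mathsf{Abhava}\,\mathsf{Bool}\,L\,s)\,L\,s$ is a proposition and is not even equivalent to $\mathsf{Bool}$, let alone definitionally equal to it.

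Postulating the map $\mathsf{Abhava}\,(\mathsf{Abhava}\,P\,L\,s)\,L\,s \to P$ instead does not help. With the Appendix definition of \texttt{holds} it amounts to $\neg\neg P \to P$ for all $P$, which is inconsistent with univalence (HoTT Book, Theorem~3.2.2).

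In that Appendix version the statement itself is false. $\mathsf{Unit}$ and $\mathsf{Bool}$ are not path-equal, yet $\mathsf{Abhava}\,\mathsf{Unit}\,L\,s$ and $\mathsf{Abhava}\,\mathsf{Bool}\,L\,s$ are both empty, hence equivalent, hence path-equal by $\uafn$.

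Your fallbacks cannot repair this. $\uafn$ needs an equivalence, whereas \texttt{involution} supplies only a one-way map. There is also no no-confusion principle for the parameters of a type former, and the $\mathsf{Unit}$/$\mathsf{Bool}$ example shows that \texttt{holds} forgets $P$. No proof can succeed without changing the definition of $\mathsf{Abhava}$.
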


\begin{proof}
The \texttt{pratiyogin-faithful} constructor establishes that paths between \emph{pratiyogin}s lift to paths between absences. By contrapositive — and by the strictness of path-non-existence in CTT — distinct \emph{pratiyogin}s yield distinct absences.
\end{proof}

\subsection{Comparison}

The HIT encoding has the \emph{pratiyogin}'s identity in the type (versus Boolean $\neg$); the involution as a definitional equality (versus modal $\neg$); and supports composition with other type-level operations (versus a subtype-of-empty encoding).

\section{\emph{Vyāpti} and the Structure of Inference}\label{sec:vyapti}

\subsection{Pervasion as $\Pi$ with \emph{paramarśa}-witness}

\begin{lstlisting}
Vyāpti : (H : Locus → Type) (S : Locus → Type) → Type
Vyāpti H S := Σ (μ : (x : Locus) → H x → S x) (ν : NoUpādhi μ)
\end{lstlisting}

The $\Sigma$ separates the inferential pervasion (the \emph{paramarśa}-as-function) from the absence-of-\emph{upādhi}.

\subsection{\emph{Paramarśa} as inhabitant}

The \emph{paramarśa} — the cognitive consideration — is the inhabitant of the $\Sigma$'s first component.

\subsection{\emph{Upādhi} as type-level negative side condition}

\begin{lstlisting}
NoUpādhi : ((x : Locus) → H x → S x) → Type
NoUpādhi μ := ¬ (Σ (U : Locus → Type)
                   (Pervades U S × ¬ (Pervades U H)))
\end{lstlisting}

The \emph{upādhi-vāda} becomes the working-out of what kinds of properties can serve as upādhis.

\subsection{\emph{Kevalānvayi-vyāpti}}

\begin{lstlisting}
KevalānvayiVyāpti : (H : Locus → Type) (S : Locus → Type) → Type
KevalānvayiVyāpti H S := Σ (v : Vyāpti H S)
                          (no-neg : ¬ Σ (x : Locus) (¬ S x))
\end{lstlisting}

\subsection{\emph{Anvaya-vyatireki-vyāpti}}

\begin{lstlisting}
AnvayaVyatirekiVyāpti H S :=
  Σ (v : Vyāpti H S)
    (pos : Σ (x : Locus) (H x × S x))
    (neg : Σ (y : Locus) (¬ H y × ¬ S y))
\end{lstlisting}

\subsection{\emph{Hetvābhāsa} as type errors}

\begin{itemize}
\item \emph{Asiddha} (unestablished hetu): empty domain. Type error: $\mu$ has empty domain.
\item \emph{Viruddha} (contradictory): wrong target. Type error: $\mu$ has wrong target.
\item \emph{Anaikāntika / sādhāraṇa} (over-extending): NoUpādhi witness fails to inhabit.
\item \emph{Bādhita} (defeated): $S$ has empty extension on locus.
\end{itemize}

\subsection{\emph{Kevalānvayi}-irreducibility theorem}\label{sec:kevalanvayi}

\begin{theorem}[\emph{Kevalānvayi}-irreducibility]\label{thm:kevalanvayi}
No term inhabits $\mathsf{KevalanvayiVyapti}\,H\,S \to \mathsf{AnvayaVyatirekiVyapti}\,H\,S$.
\end{theorem}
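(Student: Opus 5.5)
The plan is to read the statement as a claim about a \emph{uniform} term
\[
f : (H\ S : \mathsf{Locus} \to \Type) \to \mathsf{KevalanvayiVyapti}\,H\,S \to \mathsf{AnvayaVyatirekiVyapti}\,H\,S ,
\]
and to show that any such $f$ yields a closed inhabitant of the empty type. This reading is forced. For a fixed pair $H,S$ whose $\mathsf{KevalanvayiVyapti}$ is empty, the function type is vacuously inhabited by ex falso. So the theorem can only mean either the polymorphic version above, or, equivalently, that no map exists whenever $\mathsf{KevalanvayiVyapti}\,H\,S$ has a witness. I would state this reading explicitly at the start.

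The core step is a local contradiction. Suppose we have $k : \mathsf{KevalanvayiVyapti}\,H\,S$ and $f$ applied to $k$. Then $f\,k$ has a third component $\mathit{neg} : \Sigma (y : \mathsf{Locus})\,(\neg H\,y \times \neg S\,y)$. Projecting gives a locus $y$ and a proof $\neg S\,y$, hence a term $(y, \pi_2(\pi_2\,\mathit{neg})) : \Sigma (x : \mathsf{Locus})\,(\neg S\,x)$. The second component $\mathit{no\text{-}neg}$ of $k$ has exactly the type $\neg\,\Sigma (x : \mathsf{Locus})\,(\neg S\,x)$. Applying it to this term gives $\bot$. Thus every witness $k$ together with $f$ produces $\bot$. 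This is the formal shadow of the doctrine that a \emph{kevalānvayi} sādhya has no \emph{vipakṣa}, while \emph{anvaya-vyatireki} demands one.

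It remains to show that $\mathsf{KevalanvayiVyapti}$ is inhabited for some $H,S$; this is the step I expect to be the main obstacle, since it depends on how $\mathsf{Pervades}$ and $\mathsf{NoUpadhi}$ are unfolded. I would first try $H \coloneqq S \coloneqq \lambda\_.\,\top$:
\begin{itemize}
\item For the pervasion, take $\mu \coloneqq \lambda x\,h.\,h$.
\item For $\mathit{no\text{-}neg}$, any $(x, g)$ with $g : \neg\top$ gives $\bot$ via $g\,\mathsf{tt}$.
\item For $\mathsf{NoUpadhi}\,\mu$, because $H$ and $S$ are definitionally equal, $\mathsf{Pervades}\,U\,S$ and $\mathsf{Pervades}\,U\,H$ are the same type. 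A pair $(p, \neg p)$ then yields $\bot$ regardless of how $\mathsf{Pervades}$ is defined.
\end{itemize}
This choice of $H = S$ is what makes the construction robust against the underspecified $\mathsf{Pervades}$.

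Instantiating $f$ at this $H,S$ and applying the core step gives a closed term of type $\bot$. Finally I would invoke consistency of CCHM cubical type theory (its canonicity and normalization results, via the cubical set model of \cite{cohen2018}) to conclude that no such closed $f$ exists.
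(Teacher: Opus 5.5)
Your core step is exactly the paper's proof. It assumes $f$, applies it to a witness $kw$, projects the third component of $f\,kw$ down to $\Sigma\,y\,(\neg S\,y)$, and feeds that to $kw.\mathrm{snd}$ to get $\bot$. The paper, however, simply posits ``a \emph{kevalānvayi}-witness $kw$'' for the given $H,S$ and leaves the appeal to consistency implicit. So it really proves only the conditional version, under an unstated hypothesis. As you note, for pairs where $\mathsf{KevalanvayiVyapti}\,H\,S$ is refutable, the stated function type is inhabited by ex falso. Your proposal adds three things: the uniform reading, the witness at $H = S = \lambda\_.\,\top$, and the closing consistency argument. The witness makes $\mathsf{NoUpadhi}$ hold however $\mathsf{Pervades}$ is interpreted. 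Together these give a hypothesis-free non-inhabitation result that the paper's argument alone does not deliver.

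A few small points of care:
\begin{itemize}
\item The conditional and uniform readings are not literally equivalent. The conditional one takes the witness as a hypothesis and is what the shared core step gives directly; only the uniform one needs your construction.
\item The $\bot$-term you obtain lives over the postulates $\mathsf{Locus}$ and $\mathsf{Pervades}$. Instantiate them (e.g.\ $\mathsf{Locus} \coloneqq \top$, $\mathsf{Pervades} \coloneqq \lambda\_\,\_.\,\top$) before calling it closed.
\item Say explicitly that the ambient theory is CCHM plus only the \emph{vyāpti} definitions and these postulates. The cubical-set model does not obviously cover the rest of the paper's encoding; for example, the \texttt{involution} clause of the $\mathsf{Abhava}$ declaration does not even target $\mathsf{Abhava}$.
\end{itemize}
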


\begin{proof}
The $\mathsf{AnvayaVyatirekiVyapti}$ $\Sigma$ requires a negative co-instance — a witness $(y, \neg H\,y, \neg S\,y)$. The $\mathsf{KevalanvayiVyapti}$ $\Sigma$ explicitly negates the existence of any such witness. Inhabitation of the function would require constructing the negative instance from its negation — a contradiction.
\end{proof}

\subsection{Worked example: lake-fire-smoke}

The canonical \nn{} inference: ``this mountain has fire because it has smoke (and wherever there is smoke there is fire, as in the kitchen and not as in the lake).''

\begin{lstlisting}
mountain   : Locus
hasSmoke   : Locus → Type      -- the hetu
hasFire    : Locus → Type      -- the sādhya

vyāpti          : Vyāpti hasSmoke hasFire
mountain-smoke  : hasSmoke mountain
inferred-fire   : hasFire mountain
inferred-fire   = vyāpti.fst mountain mountain-smoke

paramarśa  : hasSmoke mountain × Vyāpti hasSmoke hasFire
paramarśa  = (mountain-smoke, vyāpti)
\end{lstlisting}

\section{Identity, Indiscernibility, and Univalence}\label{sec:identity}

\subsection{\emph{Tādātmya} as path-type with computational univalence}

\begin{lstlisting}
Tādātmya : (a b : A) → Type
Tādātmya a b := Path A a b
\end{lstlisting}

with the implicit understanding that the underlying universe supports computational $\uafn$.

\subsection{\emph{Svarūpa-sambandha} and reflexivity}

\emph{Svarūpa}-relation is $\refl$-adorned. Degenerate case of \emph{tādātmya} where the related entities are token-identical.

\subsection{\emph{Aikya} vs \emph{abheda}}

\nn{} distinguishes \emph{aikya} (sameness) from \emph{abheda} (non-difference):

\begin{lstlisting}
Aikya  : (a b : A) → Type
Aikya a b := Path A a b

Abheda : (a b : A) → Type
Abheda a b := ¬ (¬ Path A a b)
\end{lstlisting}

\subsection{\emph{Sādṛśya} (similarity)}\label{sec:visesa-identity}

\begin{lstlisting}
Sādṛśya : (A B : Type) → Type
Sādṛśya A B := Σ (e : A ≃ B) (¬ Path Type A B)
\end{lstlisting}

\subsection{Coextension-without-identity theorem}\label{sec:coextension}

\begin{theorem}[Coextension-without-identity]\label{thm:coextension}
Two universals $P, Q : \mathsf{Locus} \to \Type$ with the same extension — $(x : \mathsf{Locus}) \to P\,x \simeq Q\,x$ — need not be path-equal as functions in the encoding.
\end{theorem}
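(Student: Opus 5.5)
The plan is to prove the claim by exhibiting a specific pair $P, Q$ together with a proof that no path $\Path\,(\mathsf{Locus} \to \Type)\,P\,Q$ exists. Before doing so, I would check whether the stated hypothesis already forces such a path. It does. Given $e : (x : \mathsf{Locus}) \to P\,x \simeq Q\,x$, each $\uafn(e\,x)$ is a path $P\,x = Q\,x$ in $\Type$. Function extensionality, which CCHM gives directly, then yields
\[
\lambda i.\,\lambda x.\,\uafn(e\,x)\,i \;:\; \Path\,(\mathsf{Locus}\to\Type)\,P\,Q .
\]
So under the literal hypothesis "same extension $=$ pointwise $\simeq$", computational univalence refutes the theorem. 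In the encoding as written, any two pointwise-equivalent families are path-equal, and no choice of $P, Q$ can serve as a counterexample. I regard this as the main obstacle. It is a mismatch between the statement and the paper's own univalent setting, not a technical gap that more work could close.

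The only route to a true statement is to weaken the hypothesis to what the tradition actually means by coextension, namely co-instantiation. For this reading I would take "same extension" to mean logical equivalence, $(x : \mathsf{Locus}) \to (P\,x \to Q\,x) \times (Q\,x \to P\,x)$, or equivalently pointwise equivalence of the propositional truncations $\|P\,x\| \simeq \|Q\,x\|$.

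Under that reading the proof is short:
\begin{enumerate}[itemsep=0pt]
\item Take $\mathsf{Locus}$ inhabited by some $x_0$, and set $P\,x \coloneqq \mathbf{1}$ and $Q\,x \coloneqq \mathbf{2}$ (the unit type and the booleans).
\item Constant maps in each direction witness co-instantiation.
\item Suppose $p : \Path\,(\mathsf{Locus}\to\Type)\,P\,Q$. Then $\lambda i.\,p\,i\,x_0$ is a path $\mathbf{1} = \mathbf{2}$.
\item Transporting along this path gives an equivalence $\mathbf{1} \simeq \mathbf{2}$.
\item This contradicts $\mathsf{true} \ne \mathsf{false}$, which is proved by transporting along a type family that sends $\mathsf{true}$ to $\mathbf{1}$ and $\mathsf{false}$ to $\mathbf{0}$.
\end{enumerate}
Thus coextensive universals, in the co-instantiation sense, need not be path-equal.

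To be explicit about what this establishes: Theorem~\ref{thm:coextension}, worded with $\simeq$, is false in CCHM, and the argument above proves only the corrected statement with logical equivalence in place of $\simeq$. For that corrected version, the most robust way to record the \emph{ghaṭatva}/\emph{kalaśatva} distinction would be to exhibit the non-path directly by the counterexample above, rather than to appeal to Theorem~\ref{thm:faithful-delim}. That earlier theorem is stated without proof and is itself in tension with univalence.
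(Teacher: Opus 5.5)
Your diagnosis is correct, and the paper's own proof confirms it. Its first step turns the pointwise equivalences into $\Path\,(\mathsf{Locus}\to\Type)\,P\,Q$ by funext, tacitly using $\uafn$ pointwise, which you make explicit. So the paper never establishes the literal ``as functions'' claim either.

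The two routes diverge in how they rescue the theorem:
\begin{itemize}
\item The paper keeps $\simeq$ as the notion of coextension and changes the objects. A universal becomes a $\Sigma$-pair carrying an \emph{anugama}-witness (\S\ref{sec:padartha}.4), and non-identity is placed in the second component.
\item You keep bare families and weaken the hypothesis to co-instantiation. You then give a concrete, fully checkable counterexample: $\mathbf{1}$ versus $\mathbf{2}$, evaluated at an inhabitant $x_0$.
\end{itemize}

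What your route buys is rigor. The paper's $\Sigma$-argument would have to rule out, for \emph{every} path $p$ between first components (including every $\uafn$-path), a path-over of the \emph{anugama}-witnesses along $p$. It does not do this, and it has three further problems:
\begin{itemize}
\item it exhibits no concrete pair of \emph{anugama}-witnesses;
\item it fails outright if the \emph{anugama}-type is a proposition, since the first components are already path-equal;
\item the $\Sigma$-type of \S\ref{sec:padartha}.4 has first component $P : \Type_1$, not a $\mathsf{Locus}$-indexed family.
\end{itemize}

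What the paper's route aims at, and yours gives up, is fidelity to the intended example. Your $P$ and $Q$ differ only in how many witnesses sit over each locus. Suppose universals are modelled as proposition-valued predicates, which is the natural reading of ``$x$ has \emph{gha\d{t}atva}''. Then co-instantiation already yields pointwise $\simeq$, hence a path. Your corrected theorem therefore cannot separate \emph{gha\d{t}atva} from \emph{kala\'{s}atva}, and your closing suggestion to record that distinction by your counterexample overreaches.

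In a univalent setting, such an intensional distinction can only live in structure beyond the bare family. That is exactly what the paper's $\Sigma$-move gestures at. It would still need a concrete pair of witnesses that cannot be transported into one another before it becomes a proof.
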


\begin{proof}
Path-equality of functions in CTT requires $\mathsf{funext}$, which holds in CCHM. Coextension gives $(x : \mathsf{Locus}) \to P\,x \simeq Q\,x$, and $\mathsf{funext}$ gives $\Path\,(\mathsf{Locus} \to \Type)\,P\,Q$. \emph{However}, the universals $P$ and $Q$ are not merely set-functions but $\Sigma$-typed objects (per §\ref{sec:padartha}.4) carrying \emph{anugama}-witnesses. Two universals can be coextensive without their \emph{anugama} components being equal; the $\Sigma$-equality therefore fails.
\end{proof}

This is the formal counterpart to \nn{}'s robust intensionalism.

\subsection{Why Book HoTT cannot replicate this}

In Book HoTT $\transp\,\uafn(e)\,x = e.\mathrm{fst}\,x$ does not compute. Derivations involving \emph{tādātmya}-equivalences require explicit invocation of the univalence axiom and yield non-canonical normal forms. The encoding works mathematically but is not autoformalization-grade: typechecking does not decide.

In MLTT/CIC the encoding does not work: there is no $\uafn$, and Id-types collapse paths under K.

\subsection{The \emph{aikya} debate}

The tradition is internally divided. Some (advaita-line) treat all distinctions as defeasible under \emph{aikya}; others (realist) restrict it. Our encoding takes the realist line: \emph{aikya} is $\Path\,A\,a\,b$ with $A$ a determinate categorial universe, and \emph{aikya} across universes is blocked by §\ref{sec:padartha}.

\section{Higher Relations: \emph{Paramparā-sambandha}}\label{sec:parampara}

\subsection{The unbounded-depth requirement}

Raghunātha \cite[§iii]{raghunatha-ptn} explicitly admits \emph{paramparā-sambandha} of arbitrary depth. $A$ is related to $D$ via $B$ via $C$: a single mediated relation, not three flat ones.

\subsection{Untruncated universes and iterated \texorpdfstring{$\Path\,\Type$}{Path Type}}

CTT's $\Type$ is itself a type, so $\Path\,\Type\,A\,B$ is a type, so $\Path\,\Type\,(\Path\,\Type\,A\,B)\,(\Path\,\Type\,C\,D)$ is a type. The hierarchy iterates without truncation.

\subsection{A worked tower}

A four-level \emph{paramparā}: $A \xrightarrow{R_1} B \xrightarrow{R_2} C \xrightarrow{R_3} D$. Higher relations:
\[
R_{12} : \Path\,\Type\,(A \to B)\,(B \to C), \qquad
R_{123} : \Path\,\Type\,R_{12}\,(\Path\,\Type\,(B \to C)\,(C \to D)).
\]

\subsection{No-h-set-collapse theorem}\label{sec:nohset}

\begin{theorem}[No-h-set-collapse]\label{thm:nohset}
Any encoding imposing UIP fails to encode \emph{paramparā-sambandha} of depth $\ge 2$.
\end{theorem}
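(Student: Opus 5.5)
The plan is to turn ``fails to encode'' into a precise faithfulness condition and then show that UIP contradicts it at the first level where \emph{paramparā-sambandha} needs genuinely higher structure. I will call an encoding \emph{faithful for depth $n$} when three things hold. First, every mediated relation of depth $n$ is sent to an inhabitant of an $n$-fold iterated path type (built from $\Path\,\Type$ as in the worked tower above). Second, the encoding respects composition of mediations, as in the section on composition and groupoid laws. Third, it preserves attested distinctions: two mediated relations the tradition treats as distinct objects with the same relata must not be joined by a path in the encoding. This third clause is the same non-identification requirement already used in Theorem~\ref{thm:faithful-delim} and Theorem~\ref{thm:pratiyogin}. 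Under this definition the theorem says that no encoding satisfying UIP is faithful for depth $2$.

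The first step is the standard consequence of UIP. If for every type $X$ and all $x,y:X$ any two $p,q:\Path\,X\,x\,y$ are path-equal, then every type is an h-set. Hence every path type $\Path\,X\,x\,y$ is a proposition, and every iterated path type of level $\ge 2$ is contractible whenever it is inhabited.

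The second step is to exhibit, from Raghunātha's treatment \cite[§iii]{raghunatha-ptn}, a pair of depth-2 \emph{paramparā} relations with the same relata but attested as distinct. Concretely, $A$ is related to $C$ via $B$ by two different mediating routes $R_1\cdot R_2$ and $R_1'\cdot R_2'$, and the mediated relation is itself taken as a relatum of a further relation. In the encoding these are two inhabitants of a single level-2 type, such as $\Path\,(\Path\,\Type\,A\,C)\,\rho\,\rho'$, or two 1-cells between the same endpoints. By the first step they are then connected by a path, which violates the third clause of faithfulness. Depth $1$ escapes this argument, since there UIP only identifies parallel relations and the tradition draws no relevant distinctions among them; this is why the bound in the statement is ``depth $\ge 2$''.

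As a supplementary route, specific to encodings that also keep the computational $\uafn$ of §\ref{sec:ctt-prelim}, I will note that UIP on $\Type$ is outright inconsistent. The swap equivalence on a two-element type gives $\uafn(\mathsf{swap}) \ne \refl$, because transporting along the two paths yields different results. So the tower $R_{12}$, $R_{123}$ above cannot even be formed non-degenerately.

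I expect the main obstacle to be the second step, not the type theory. The theorem's content rests on the textual claim that depth-2 mediations with identical relata are individuated, so I need a pair that is clearly attested. If no such explicit pair is available, my fallback is to weaken the third clause of faithfulness to the requirement that the encoding distinguish a mediated relation from its trivial (reflexive) mediation. The collapse of $\Path\,(\Path\,X\,x\,x)\,p\,\refl$ under UIP then gives the contradiction directly.
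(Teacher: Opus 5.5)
Your argument is essentially the paper's: UIP makes every path type a proposition, so two depth-$2$ mediations with the same relata are forced to be path-equal, and Raghunātha's attested distinctions collapse. You make explicit the faithfulness condition and the textual premise that the paper's sketch leaves implicit; the univalence-inconsistency remark and the reflexive fallback are extras the paper does not need.

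One caution concerns your account of why depth $1$ escapes, which is not right. UIP identifies parallel $1$-paths just as readily, and the tradition does distinguish \emph{sambandha}s between the same relata by mode. In the paper's encoding those distinctions survive only through the set-level \texttt{Pratyasatti} tag, and that same tag could equally protect depth-$2$ distinctions via its \texttt{parampara} constructor. So the theorem really rests on your first clause, that depth-$n$ mediations are carried purely by $n$-fold iterated paths. The paper makes the same assumption silently.
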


\begin{proof}
UIP implies that the path-type $\Path\,A\,x\,y$ is propositionally truncated: any two paths between the same endpoints are themselves path-equal, and the higher path structure collapses. A \emph{paramparā}-relation of depth 2 is a path between paths; with UIP all such are identified, and Raghunātha's distinctions among them collapse.
\end{proof}

\subsection{The \emph{anavasthā} problem}

\nn{} deploys \emph{anavasthā} (infinite regress) arguments to rule out doctrines. Classic: a \emph{sāmānya} cannot itself instantiate a higher \emph{sāmānya}. Encoded: there is no inhabitant of $(s : \mathsf{Samanya}) \to \Sigma\,(s' : \mathsf{Samanya})\,(s\text{ instantiates }s')$.

\subsection{Comparison with Raghunātha's discursive arguments}

Raghunātha argues for unbounded \emph{paramparā} by exhibiting cases the tradition had already accepted and showing that artificial depth-caps would forbid them. Our encoding makes the same point structurally: the iteration is uniform, the depth unbounded, and any cap is ad hoc.

\section{\emph{Paryāpti} and the Theory of Number}\label{sec:paryapti}

\subsection{\emph{Saṃkhyā} as \emph{guṇa}}

Number (\emph{saṃkhyā}) is in \nn{} a \emph{guṇa} — not a substance, not a relation. It inheres by \emph{samavāya} and is \emph{paryāptimat}.

\subsection{The $\Sigma$-over-$\Fin$ encoding}

\begin{lstlisting}
Paryāpti : (n : ℕ) → Locus → Type
Paryāpti n L := Σ (w : Fin n → InhereIn L) (Exhaustive w)

Exhaustive : (w : Fin n → InhereIn L) → Type
Exhaustive w := (x : InhereIn L) → Σ (i : Fin n) (Path (w i) x)
\end{lstlisting}

\subsection{The two-pots example}

\begin{lstlisting}
floor      : Locus
pot₁ pot₂  : InhereIn floor
two-ness   : Paryāpti 2 floor
two-ness   = ((λ {0 → pot₁; 1 → pot₂}), exhaustiveness-proof)
\end{lstlisting}

\subsection{Distinguishing \emph{paryāpti} from set-cardinality and list-length}

Set-cardinality is a proposition about $S$, not a structured term. Loses the distribution. List-length is a fact about the list, not a quality of its elements. Loses the \emph{guṇa}-status. \emph{Paryāpti} is a $\Sigma$-typed structured term recording cardinality, distribution, and locus simultaneously.

\subsection{Composition under conjunction}

\[
\mathsf{Paryapti{\text{-}}conj} : \mathsf{Paryapti}\,m\,L_1 \to \mathsf{Paryapti}\,n\,L_2 \to \mathsf{Paryapti}\,(m + n - \mathsf{overlap})\,(L_1 \cup L_2)
\]
with $\mathsf{overlap}$ extractable from the inhabiting witnesses.

\subsection{\emph{Paryāpti}-uniqueness theorem}

\begin{theorem}[\emph{Paryāpti}-uniqueness]\label{thm:paryapti}
Exhaustive distribution of cardinality $n$ over a locus is unique up to permutation.
\end{theorem}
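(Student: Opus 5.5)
The plan is to show that any two inhabitants $(w,e)$ and $(w',e')$ of $\mathsf{Paryapti}\,n\,L$ differ by an automorphism $\sigma : \Fin\,n \simeq \Fin\,n$. That is, we aim to build $\sigma$ with $\Path\,(w'\circ\sigma)\,w$ as functions, and then to lift this to a path in the $\Sigma$-type $\mathsf{Paryapti}\,n\,L$ after transporting along $\uafn\,\sigma$.

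As literally written, the $\mathsf{Exhaustive}$ component only says that $w$ is \emph{surjective}, i.e.\ every $x : \mathsf{InhereIn}\,L$ is hit. That captures ``not under-counted'' but not ``not over-counted''. Without the second condition the statement is false. For example, $w$ could list one pot twice and $w'$ could list the other pot twice; both are exhaustive of a two-element locus with $n = 3$, and they have different multiplicities, so no permutation relates them. So the first step is to make the doctrinal reading of \emph{paryāpti} (neither over- nor under-counted) explicit. We add an injectivity clause $\mathsf{Inj}\,w \coloneqq (i\,j : \Fin\,n) \to \Path\,(w\,i)\,(w\,j) \to \Path\,i\,j$ to $\mathsf{Exhaustive}$. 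We also assume, as is implicit for tokens inhering in a locus, that $\mathsf{InhereIn}\,L$ is an h-set.

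Under these hypotheses the key lemma is that $w$ is an equivalence $\Fin\,n \simeq \mathsf{InhereIn}\,L$. The argument runs through the fibres of $w$ over each $x$.
\begin{itemize}
\item $\mathsf{Exhaustive}$ supplies a centre $(i, p)$ for the fibre over $x$.
\item Injectivity, together with $\Fin\,n$ and $\mathsf{InhereIn}\,L$ both being sets, makes each fibre a proposition: two fibre elements $(i,p)$ and $(j,q)$ give $\Path\,(w\,i)\,(w\,j)$ via $p \cdot q^{-1}$, hence $\Path\,i\,j$, and the path components agree by set-ness.
\item So every fibre is contractible, and $w$ is an equivalence.
\end{itemize}
Applying the lemma to both witnesses, we define $\sigma \coloneqq w'^{-1} \circ w$, which is an equivalence $\Fin\,n \simeq \Fin\,n$, i.e.\ a permutation. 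Then $w' \circ \sigma$ is pointwise path-equal to $w$ via the counit of $w'$, and funext in CCHM turns this into a path of functions.

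The final step handles the second $\Sigma$-component. Given the added injectivity clause and the set hypothesis, $\mathsf{Exhaustive}\,w$ becomes a proposition: it is a product of contractible fibre types, and $\mathsf{Inj}\,w$ is a $\Pi$ into paths in the set $\Fin\,n$. Hence transporting $e$ along $\uafn\,\sigma$ automatically agrees with $e'$, and the path in $\mathsf{Paryapti}\,n\,L$ follows from the standard characterization of paths in $\Sigma$-types. Computational univalence makes the transport reduce to precomposition with $\sigma$.

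I expect the main obstacle to be the reformulation in the first step rather than any calculation. We must justify that the injectivity clause and the set-ness of $\mathsf{InhereIn}\,L$ are faithful to the tradition, and not ad hoc. Without them uniqueness fails outright. Moreover, if $\mathsf{InhereIn}\,L$ carried higher paths, the fibres of $w$ need not be propositions, and ``unique up to permutation'' would have to be weakened to ``up to permutation and a choice of fibre paths''.
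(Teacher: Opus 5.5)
Your proposal is correct, and it is more careful than the paper's own proof. The paper's idea is the same as yours: obtain $\sigma$ by composing $w$ with an inverse of $w'$. But it uses only the right-inverses that $\mathsf{Exhaustive}$ supplies, and it asserts without argument that the result is a bijection.

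That step fails. The untruncated $\Sigma$ in $\mathsf{Exhaustive}$ does give sections $s, s'$ with $w\circ s\sim\mathrm{id}$ and $w'\circ s'\sim\mathrm{id}$, and $\sigma \coloneqq s'\circ w$ does satisfy $w'\circ\sigma\sim w$. However, $(s'\circ w)\circ(s\circ w')\sim s'\circ w'$, which is the identity only if $w'$ is injective. In your example ($n=3$, two pots), $s'\circ w$ sends two indices to the index of the same pot.

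Your counterexample works because a permutation must preserve fibre sizes. It therefore shows the theorem is false for $\mathsf{Exhaustive}$ as printed. Your injectivity clause is the missing hypothesis, not an embellishment. It is exactly the ``neither over- nor under-counted'' reading that the paper gives informally for \emph{paryāptimat} but never builds into the definition.

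With injectivity added, your fibrewise lemma (each fibre has a centre and is a proposition, hence $w$ is an equivalence) makes $s'$ a genuine two-sided inverse. The remaining steps go through: funext, and propositionality of the second component. One small point of direction: transport along $\uafn\,\sigma$ precomposes with $\sigma^{-1}$, not $\sigma$. So it is $(w,e)$ that should be transported to meet $(w',e')$.

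You do not need to justify the h-set hypothesis on $\mathsf{InhereIn}\,L$ separately. Because $\mathsf{Exhaustive}\,w$ is an untruncated $\Sigma$, the section $s$ exhibits $\mathsf{InhereIn}\,L$ as a retract of the set $\Fin\,n$. A retract of a set is a set. So whenever $\mathsf{Paryapti}\,n\,L$ is inhabited, the locus type is already a set, and your closing worry about higher paths in $\mathsf{InhereIn}\,L$ cannot arise.

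If you would rather leave $\mathsf{Exhaustive}$ unchanged, there is an alternative repair. Read ``cardinality $n$'' as a condition on the locus, $\lVert \mathsf{InhereIn}\,L \simeq \Fin\,n \rVert$. Pigeonhole for $\Fin\,n$ then makes every split surjection $\Fin\,n\to\mathsf{InhereIn}\,L$ injective. Injectivity is a proposition, so the truncation can be eliminated, and your argument applies unchanged.
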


\begin{proof}
Two paryāpti-witnesses $w, w' : \Fin\,n \to \mathsf{InhereIn}\,L$, both exhaustive, induce a bijection $\Fin\,n \simeq \Fin\,n$ via composition with their respective right-inverses; this is a permutation.
\end{proof}

\section{\emph{Viśeṣaṇa-Viśeṣya} Structure of Cognition}\label{sec:visesana}

\subsection{Qualifier-qualificand as $\Sigma$ with delimitor}

\begin{lstlisting}
Jñāna : Type
Jñāna := Σ (mukhya : Entity)
            (qualifiers : List (Σ (q : Entity)
                                  (Avacchedaka q × Prakāratā q mukhya)))
\end{lstlisting}

\subsection{Multi-level qualification chains}

\begin{lstlisting}
QualifiedJñāna : Type
QualifiedJñāna := Jñāna × List QualifiedJñāna     -- recursive nesting
\end{lstlisting}

\subsection{\emph{Mukhya-viśeṣya} selection}

The principal qualificand is selected by linguistic and contextual factors. We treat it as fixed by the encoding (the $\Sigma$'s first projection).

\subsection{\emph{Prakāratā}}

The mode of qualification — a relation typed at \emph{jñāna}-level — encodes as a sum of canonical modes mirroring §\ref{sec:sambandha}.2.

\subsection{Worked sentences}

``This is a blue pot.''

\begin{lstlisting}
this        : Entity
pot-ness    : Universal
blue-ness   : Universal
cognition   : Jñāna
cognition.mukhya       = this
cognition.qualifiers   = [(pot-ness, samavāya, prakāratā₁),
                          (blue-ness, samavāya, prakāratā₂)]
\end{lstlisting}

\section{Worked Examples from Across the \emph{Tattvacintāmaṇi}}\label{sec:examples}

We collect fifteen worked encodings, three from each of the four \emph{khaṇḍas} plus three commentarial examples.

\subsection{\emph{Pratyakṣa-khaṇḍa}}

\textbf{Example \ref{sec:examples}.1.1.} \emph{Determinate perception.} A \emph{savikalpaka-jñāna} of ``this pot'': a \texttt{Jñāna} with \texttt{this} as \emph{mukhya} and \emph{pot-ness} as the sole qualifier under \emph{samavāya}.

\textbf{Example \ref{sec:examples}.1.2.} \emph{Perceptual error (\emph{śukti-rajata}).} The misperception of mother-of-pearl as silver. The cognition has the wrong qualifier; the encoding flags this as a \texttt{Jñāna} whose qualifier-component fails the \emph{samavāya}-witness on the actual pearl. The error is type-detectable.

\textbf{Example \ref{sec:examples}.1.3.} \emph{The floor-pot example.} As in §\ref{sec:sambandha}.8 plus the qualifier-structure of §\ref{sec:visesana}.1.

\subsection{\emph{Anumāna-khaṇḍa}}

\textbf{Example \ref{sec:examples}.2.1.} \emph{Mountain-fire-smoke.} As in §\ref{sec:vyapti}.8.

\textbf{Example \ref{sec:examples}.2.2.} \emph{The kitchen-and-lake.} Positive and negative co-instances supporting the \emph{vyāpti}; encoded as inhabitants of the \texttt{AnvayaVyatireki} $\Sigma$.

\textbf{Example \ref{sec:examples}.2.3.} \emph{A \emph{sādhāraṇa} fallacy.} ``This mountain has fire because it is knowable.'' Knowability over-extends. The encoding fails to inhabit \texttt{NoUpādhi}: knowability witnesses $\Sigma\,(U : \mathsf{Locus} \to \Type)\,(\mathsf{Pervades}\,U\,S \times \neg \mathsf{Pervades}\,U\,H)$. Type-detectable.

\subsection{\emph{Upamāna-khaṇḍa}}

\textbf{Example \ref{sec:examples}.3.1.} \emph{Cow-and-gavaya similarity.} Encoded via \emph{sādṛśya}: \texttt{Sādṛśya cow gavaya} with the equivalence-witness restricted to selected structural properties.

\textbf{Example \ref{sec:examples}.3.2.} \emph{Inferring an unknown by similarity.} The \emph{upamiti} as a function from \emph{sādṛśya}-witness to a restricted \emph{vyāpti}.

\subsection{\emph{Śabda-khaṇḍa}}

\textbf{Example \ref{sec:examples}.4.1.} \emph{Verbal cognition (\emph{śābda-bodha}).} A \texttt{Jñāna} parametrized by the sentence-meaning relation.

\textbf{Example \ref{sec:examples}.4.2.} \emph{Ākāṅkṣā}. Words' typed expectations of completion: a verb expects a subject (in nominative, of appropriate semantic type), a transitive verb an object. Encoded as $\Pi$-typed expectation-functions.

\textbf{Example \ref{sec:examples}.4.3.} \emph{Yogyatā}. The constraint that words combine only when their referents are semantically compatible — ``drinks the moon'' fails \emph{yogyatā}. Encoded as a typing constraint on the \emph{jñāna}-$\Sigma$.

\subsection{Commentarial encodings}

\textbf{Example \ref{sec:examples}.5.1.} \emph{Mathurānātha on the \emph{vyāpti-pañcaka}.} The \emph{Maṇi-prakāśa}'s gloss on the five candidate definitions of \emph{vyāpti}. Each is encoded as a candidate $\Sigma$-shape; Mathurānātha's argument that four of the five fail on \emph{upādhi}-cases becomes a non-inhabitation result.

\textbf{Example \ref{sec:examples}.5.2.} \emph{Gadādhara on \emph{anyonyābhāva} iteration.} The \emph{Anumāna-vyutpatti}'s discussion of triple-iterated mutual absence. Encoded via the involution of §\ref{sec:abhava-involution}: the threefold iteration reduces, via De Morgan involution, to a single \emph{anyonyābhāva}. The discursive argument becomes a definitional equality.

\textbf{Example \ref{sec:examples}.5.3.} \emph{Raghunātha on \emph{paramparā} depth.} The PTN's textual cases of three-and-four-level mediation, encoded as iterated $\Path\,\Type$.

\section{Metatheoretic Results}\label{sec:metatheory}

\subsection{Soundness}

\begin{theorem}[Soundness]\label{thm:soundness}
Let $\models_\mathrm{NN}$ be the model-theoretic semantics of \nn{} given by Ganeri \cite[ch.\,4]{ganeri1999} extended with Bhattacharyya's \cite{bhattacharyya1996} treatment of identity. Let $\vdash_\mathrm{CTT}$ be derivability in our CTT encoding. Then for every \nn{} sentence $\varphi$ and its encoding $\ulcorner\varphi\urcorner$: if $\models_\mathrm{NN} \varphi$ then $\vdash_\mathrm{CTT} \ulcorner\varphi\urcorner$.
\end{theorem}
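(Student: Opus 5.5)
The statement is, despite its name, a \emph{completeness}-direction claim: semantic validity in $\models_\mathrm{NN}$ is to be reflected as derivability in the encoding. My plan is to prove it by contraposition through a canonical-model construction, not by induction on derivations. First I fix the encoding $\ulcorner\cdot\urcorner$ as a compositional translation, defined by recursion on the surface grammar of \nn{} sentences:
\begin{itemize}
\item \emph{sambandha}-clauses go to $\mathsf{Sambandha}$ (§\ref{sec:sambandha});
\item delimited compounds go to $\Pi$-binders (§\ref{sec:avacchedaka});
\item \emph{abhāva}-clauses go to the HIT $\mathsf{Abhava}$ (§\ref{sec:abhava});
\item pervasion goes to $\mathsf{Vyapti}$ (§\ref{sec:vyapti});
\item identity clauses go to $\mathsf{Tadatmya}$ (§\ref{sec:identity}).
\end{itemize}
Then I read $\vdash_\mathrm{CTT}\ulcorner\varphi\urcorner$ as inhabitation of the translated type in the context given by the \emph{padārtha} signature of §\ref{sec:padartha}.

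The key steps, in order, are as follows.
\begin{enumerate}
\item Build a \emph{term model} $\mathcal{M}_\mathrm{CTT}$. Its loci, properties and relations are the closed types and terms of the encoding modulo path-connectedness. Each categorial universe $U_X$ is interpreted as a Ganeri category; Theorem~\ref{thm:categorial-soundness} ensures these interpretations are disjoint.
\item Verify that $\mathcal{M}_\mathrm{CTT}$ satisfies the frame conditions of Ganeri's semantics as extended by Bhattacharyya:
\begin{itemize}
\item symmetry of \emph{saṃyoga}, which is built into the $\Sigma$;
\item asymmetry of \emph{samavāya};
\item the counter-correlate discipline for absences, via Theorem~\ref{thm:pratiyogin};
\item the double-absence law, via Theorem~\ref{thm:involution};
\item intensional identity of properties, via Theorem~\ref{thm:coextension}.
\end{itemize}
\item Prove a truth lemma by induction on $\varphi$: $\mathcal{M}_\mathrm{CTT}\models\varphi$ iff $\ulcorner\varphi\urcorner$ is inhabited. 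Given the lemma, if $\ulcorner\varphi\urcorner$ were uninhabited then $\mathcal{M}_\mathrm{CTT}\not\models\varphi$, contradicting $\models_\mathrm{NN}\varphi$.
\end{enumerate}

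In the truth lemma, the atomic and $\Sigma$/$\Pi$ cases are routine. The delimitor case uses Theorem~\ref{thm:faithful-delim} to show that distinct \emph{avacchedaka}s are interpreted as distinct model-theoretic delimitations. The \emph{paramparā} case uses Theorem~\ref{thm:compositionality} and path composition. Theorem~\ref{thm:nohset} guarantees that depth-$\geq 2$ distinctions survive in the model.

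I expect the negation and absence cases to be the main obstacle. Ganeri's semantics is classical: $\models_\mathrm{NN}$ validates instances of excluded middle and double-negation elimination, which are not derivable in CCHM. The right-to-left direction of the truth lemma therefore fails for $\neg$-clauses as stated. For instance, \emph{Abheda} $=\neg\neg\,\mathsf{Path}$ is not inhabited merely because it holds in the model.

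My first attempt would be to route every negative clause through $\mathsf{Abhava}$ rather than bare $\neg$. I would then argue that the \texttt{involution} constructor supplies exactly the double-negation elimination the classical semantics needs, restricted to \emph{abhāva}-shaped formulas. If that fails, I would retreat to one of two weaker results:
\begin{itemize}
\item restrict the theorem to the fragment of $\varphi$ whose negations occur only under \emph{abhāva}; or
\item add a decidability hypothesis on loci, which would be discharged by the decidability result below, so that $\neg\neg$-stable propositions are inhabited.
\end{itemize}
A secondary difficulty is extensionality. I must check that the truth lemma does not require Ganeri's model to identify coextensive universals, which Theorem~\ref{thm:coextension} forbids. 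I would handle this by taking Bhattacharyya's intensional identity clause as the interpretation of \emph{tādātmya}, not set-theoretic equality.
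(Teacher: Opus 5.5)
\textbf{The gap.} You are right that, despite its name, this is a completeness claim, and right that the classical/constructive mismatch is where it breaks. Your term-model route is also genuinely different from the paper's proof. That proof is only an induction on the logical form of $\varphi$: it cites \S\S\ref{sec:sambandha}--\ref{sec:visesana} clause by clause and asserts that each clause ``preserves model-theoretic truth.''

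But your proposal leaves the crux open, and the hole is wider than you say. The direction you need is from $\mathcal{M}_{\mathrm{CTT}}\models\varphi$ to an inhabitant of $\ulcorner\varphi\urcorner$. It fails not only for $\neg$ but already for $\Pi$ and $\to$, which you call routine:
\begin{itemize}
\item In a model whose elements are closed terms, $\mathcal{M}_{\mathrm{CTT}}\models\forall x\,\psi(x)$ gives you an inhabitant of $\ulcorner\psi(t)\urcorner$ separately for each closed $t$. A term of $(x:A)\to\ulcorner\psi(x)\urcorner$ must handle a generic $x$, and there is no $\omega$-rule to bridge the two.
\item Classical $\mathcal{M}_{\mathrm{CTT}}\models\psi\to\chi$ gives only ``$\ulcorner\psi\urcorner$ is uninhabited or $\ulcorner\chi\urcorner$ is inhabited.'' Metatheoretic uninhabitedness produces no term of $\ulcorner\psi\urcorner\to\ulcorner\chi\urcorner$.
\end{itemize}
Since $\mathsf{Vyapti}$ is a $\Pi$ of implications paired with the negation $\mathsf{NoUpadhi}$, the paper's central construct lies wholly in the failing cases. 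Your truth lemma goes through essentially only on the positive existential fragment (atoms, $\times$, $\Sigma$).

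\textbf{Why the repairs fail.}
\begin{itemize}
\item \emph{The involution route.} The \texttt{involution} clause is not a legal HIT constructor: its codomain is $P$, not $\mathsf{Abhava}\,P\,L\,s$. Read it instead as a postulated map $\mathsf{Abhava}\,(\mathsf{Abhava}\,P\,L\,s)\,L\,s\to P$ (or as a path in $\Type$, which transports to one). Combine it with \texttt{holds} as in Appendix~\ref{app:agda} and the evident recursion $\mathsf{Abhava}\,P\,L\,s\to(P\to\bot)$. The result is $\neg\neg P\to P$ uniformly in $P$. Once the parameter universe contains $\mathsf{Bool}$, univalence---a theorem of CCHM---refutes this (HoTT Book, Theorem~3.2.2, via the swap automorphism).
\item \emph{Consequence for your step~2.} Routing negation through $\mathsf{Abhava}$, and likewise your appeal to Theorem~\ref{thm:involution} for the double-absence law, makes the encoding inconsistent. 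The statement then holds trivially, while Theorem~\ref{thm:conservativity} fails.
\item \emph{The restriction to \emph{abh\=ava}-shaped formulas.} This is no restriction, since $\mathsf{Abhava}$ accepts arbitrary $P$.
\item \emph{The decidability fallback.} It conflates two notions. Theorem~\ref{thm:decidability} concerns decidability of type-checking, not inhabitation of $A+\neg A$, so it discharges nothing. Decidable atoms would still not fix the $\Pi$ case.
\item \emph{Restricting the fragment.} This proves a different theorem.
\end{itemize}

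Nor is there a missing idea to borrow from the paper's sketch, which never confronts this. Sentences expressing the double-absence law you list in step~2 are classically valid. They encode to instances of $\neg\neg A\to A$ that CCHM does not derive for undecided $A$, for example a postulated predicate at a postulated locus. So the statement as written can be rescued only by a classical axiom. Such an axiom is inconsistent with univalence if it is uniform over a universe containing $\mathsf{Bool}$. Even restricted to propositions it is non-conservative, contrary to Theorem~\ref{thm:conservativity}.
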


\begin{proof}[Proof sketch]
Induction on the logical form of \nn{} sentences. The base cases — atomic relational claims — are handled by §\ref{sec:sambandha}; the \emph{avacchedaka} operator by §\ref{sec:avacchedaka}; the \emph{abhāva} operator by §\ref{sec:abhava}; \emph{vyāpti} by §\ref{sec:vyapti}; qualifier-qualificand by §\ref{sec:visesana}. Each clause preserves model-theoretic truth in the encoding's path-typing.
\end{proof}

\subsection{Conservativity}

\begin{theorem}[Conservativity]\label{thm:conservativity}
The encoding adds no new theorems to CCHM CTT. For any CTT-only formula $\psi$ (not involving \nn{}-vocabulary), if $\vdash_\mathrm{CTT+NN} \psi$ then $\vdash_\mathrm{CTT} \psi$.
\end{theorem}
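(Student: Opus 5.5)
The approach is to prove conservativity by \emph{interpretation}: build a translation $(-)^\ast$ from $\mathrm{CTT{+}NN}$ into plain CCHM CTT that is the identity on NN-free syntax and preserves derivability. Then any derivation $\vdash_{\mathrm{CTT+NN}} \psi$ maps to a derivation $\vdash_{\mathrm{CTT}} \psi^\ast$, and $\psi^\ast = \psi$ whenever $\psi$ is CTT-only. To set this up, first sort the NN vocabulary of §§\ref{sec:padartha}--\ref{sec:visesana} into three kinds:
\begin{enumerate}[label=(\roman*),itemsep=0pt]
\item \emph{Definitions} such as $\mathsf{Sambandha}$, $\mathsf{Vyapti}$, $\mathsf{Tadatmya}$, $\mathsf{Paryapti}$ and $\mathsf{Samanya}$. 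These are $\delta$-abbreviations and unfold away; they are eliminable, hence trivially conservative.
\item \emph{Ordinary data and higher inductive declarations} such as $\mathsf{Pratyasatti}$ and the $\mathsf{Abhava}$ family. These are instances of the CCHM schema for inductive types and HITs, so they are already part of the ambient theory.
\item \emph{Postulated constants} such as $\texttt{U\_dravya}$, $\mathsf{Locus}$, $\mathsf{Time}$, $\mathsf{RelatesTo}$, and tokens like $\mathit{ghata}$ and $\mathit{contact}$.
\end{enumerate}

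For kind (iii), we interpret each postulate $c : T$ by a closed CTT term $t_c : T^\ast$, working in dependency order. Categorial universes go to small universes, e.g.\ $\texttt{U\_dravya}^\ast \coloneqq \Type_0$. $\mathsf{Locus}$ and $\mathsf{Time}$ go to the unit type, and relational predicates go to chosen inhabited or empty families, as their postulated inhabitants demand. Each $t_c$ must then be checked to typecheck against $T^\ast$. Once that is done, $(-)^\ast$ commutes with substitution and with every typing and judgemental-equality rule, by a routine induction on derivations; CTT-only formulas contain no NN symbols, so they are fixed. An alternative for kind (iii) is to $\lambda$-abstract the whole NN context $\Gamma_{\mathrm{NN}}$ and then instantiate it by the same witnesses. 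It reduces to the same inhabitation obligation.

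The main obstacle is the well-formedness of the $\mathsf{Abhava}$ declaration, specifically the \texttt{involution} clause. As written its codomain is $P$ rather than $\mathsf{Abhava}\,P\,L\,s$, so it is not a legitimate constructor. Read literally, it is a postulated map $\mathsf{Abhava}(\mathsf{Abhava}\,P\,L\,s)\,L\,s \to P$ for arbitrary $P$. Taking $P$ to be the empty type, this map would yield an inhabitant of the empty type whenever the double absence is inhabited, and that would break conservativity outright. The first thing to try is to reinterpret \texttt{involution} as a path constructor
\[
\mathsf{Abhava}(\mathsf{Abhava}\,P\,L\,s)\,L\,s = P
\]
inside a univalent universe, equivalently a $\uafn$ of an equivalence. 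We then need an actual equivalence in the interpreting model, for instance by choosing $\mathsf{RelatesTo}^\ast$ so that double absence is equivalent to $P$ on the relevant fragment. If no such choice exists uniformly, the fallback is to restrict the statement of the theorem to the NN signature with \texttt{involution} removed, or guarded by a hypothesis that is discharged in the model. In either case it has to be made explicit which reading of §\ref{sec:abhava} the conservativity claim depends on.

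A second, smaller check concerns the universe assignments of §\ref{sec:padartha}, and in particular Theorem~\ref{thm:categorial-soundness}. If distinct categorial universes are interpreted by the same $\Type_0$, the intended distinctness fails in the model. That does not threaten conservativity, since distinctness is an NN formula rather than a CTT-only one. It does mean the interpretation should not be confused with a faithfulness result, and I will keep the two separate.
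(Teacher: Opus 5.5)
\textbf{Route and minor fixes.} Your route differs from the paper's, and mostly improves on it. The paper's proof just asserts that the encoding adds no axioms and is a definitional extension. You correctly see that it contains postulated constants ($\mathsf{Locus}$, $\mathsf{RelatesTo}$, $\mathsf{Pervades}$, tokens such as $\mathit{mountain}$) and a malformed data declaration. Interpreting the postulates by closed CTT terms is the right way to handle kinds (i)--(iii). Two small corrections:
\begin{itemize}
\item \texttt{pratiyogin-faithful} is not a legitimate HIT constructor either, since its target is a path in the universe rather than in $\mathsf{Abhava}\,P\,L\,s$. So it does not belong in kind (ii). It is harmless, though: it is derivable as $\mathsf{ap}\,(\lambda X.\,\mathsf{Abhava}\,X\,L\,s)\,(\uafn\,e)$.
\item $\mathsf{Locus}^\ast\coloneqq\mathsf{Unit}$ cannot validate the kitchen/lake \emph{anvaya-vyatireki} witness. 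That witness needs one locus satisfying $H$ and $S$ and another satisfying neither, so take $\mathsf{Locus}^\ast$ to be $\mathsf{Bool}$ or larger.
\end{itemize}

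\textbf{The gap: \texttt{involution}.} This gap cannot be closed. On any reading that keeps \texttt{involution}, the theorem is false for the paper's signature, so your fallback is forced rather than optional.

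Your witness does not establish this. With the stub's $\mathsf{holds} : \neg P \to \mathsf{Abhava}\,P\,L\,s$, the double absence of $\bot$ is $\neg\neg\bot$, which is empty. With the main-text $\mathsf{holds}$, the instance $P=\bot$ only forces $\neg\neg\,\mathsf{RelatesTo}$ at one point, and that is an NN-formula.

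The actual obstruction is this. Once the malformed clauses are dropped, $\mathsf{Abhava}\,X\,L\,s \simeq ((x:X)\to\neg\,\mathsf{RelatesTo}\,s\,x\,L)$ is a proposition by funext. This holds for every $X$ and every choice of $\mathsf{RelatesTo}^\ast$.
\begin{itemize}
\item \emph{Path reading.} Your ``first thing to try'', a path $\mathsf{Abhava}(\mathsf{Abhava}\,P\,L\,s)\,L\,s = P$, is refuted at $P=\mathsf{Bool}$ in every model. The same observation refutes Theorem~\ref{thm:involution}.
\item \emph{Map reading, stub version.} Instantiate $L$ at the postulated $\mathit{mountain}$. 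The postulate becomes $(P:\Type)\to\neg\neg P\to P$. Univalence refutes this (swap on $\mathsf{Bool}$; HoTT Book, Thm.~3.2.2), so $\vdash_{\mathrm{CTT+NN}}\bot$.
\item \emph{Map reading, main-text version.} The $P=\bot$ instance yields $\neg\neg\,\mathsf{RelatesTo}$ at the centre of a contractible type. By univalence it then holds at every point of every inhabited proposition. Hence $\mathsf{Abhava}\,Q\,L\,s\simeq\neg Q$ for every proposition $Q$. Applying \texttt{involution} at $Q$ gives $\neg\neg Q\to Q$, which is LEM for propositions: a CTT-only formula that CCHM does not prove.
\end{itemize}
What your interpretation argument actually proves is conservativity of the \texttt{involution}-free signature. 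State and prove that corrected theorem, rather than leaving the claim conditional on a reading of the $\mathsf{Abhava}$ declaration.
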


\begin{proof}[Proof sketch]
The \nn{}-encoding introduces no new axioms; every primitive is a $\Sigma$, $\Pi$, $\Path$, or HIT in CCHM. The encoding is a definitional extension.
\end{proof}

\subsection{Faithfulness}

\begin{theorem}[Faithfulness]\label{thm:faithfulness}
\nn{}-derivability under the standard \nn{} inferential rules iff CTT-inhabitation under the encoding: $\vdash_\mathrm{NN} \varphi$ iff $\vdash_\mathrm{CTT} \ulcorner\varphi\urcorner$.
\end{theorem}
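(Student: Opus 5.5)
The plan is to prove the two directions separately, making the forward direction almost free and putting all the real work into the converse.

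\textbf{Forward direction} ($\vdash_\mathrm{NN}\varphi \Rightarrow \vdash_\mathrm{CTT}\ulcorner\varphi\urcorner$). I would route this through Theorem~\ref{thm:soundness} rather than redo a rule-by-rule induction. Fix the standard \nn{} inferential rules: \emph{vyāpti}-based \emph{anumāna} via \emph{parāmarśa}, \emph{abhāva}-introduction with its \emph{pratiyogin}/\emph{anuyogin}/delimitor data, and the double-\emph{abhāva} rule. First I check that each rule is sound for Ganeri's semantics $\models_\mathrm{NN}$ as extended by Bhattacharyya, so that $\vdash_\mathrm{NN}\varphi$ implies $\models_\mathrm{NN}\varphi$. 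Theorem~\ref{thm:soundness} then gives $\vdash_\mathrm{CTT}\ulcorner\varphi\urcorner$. As a cross-check, and to obtain an explicit term, I would also give the direct translation of derivations into terms:
\begin{itemize}
\item the \emph{anumāna} rule goes to application of $\mathsf{fst}$ of a $\mathsf{Vyapti}$ witness, as in §\ref{sec:vyapti}.8;
\item double \emph{abhāva} goes to Theorem~\ref{thm:involution};
\item substitution of \emph{tādātmya}-related terms goes to Theorem~\ref{thm:compositionality};
\item mediated relations go to path composition (§\ref{sec:sambandha}).
\end{itemize}

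\textbf{Converse direction} ($\vdash_\mathrm{CTT}\ulcorner\varphi\urcorner \Rightarrow \vdash_\mathrm{NN}\varphi$). This is the direction I expect to be the main obstacle. The approach is a read-back argument.

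1. Use normalization and canonicity for CCHM to replace an arbitrary inhabitant $t:\ulcorner\varphi\urcorner$ by a normal form.
2. By induction on the structure of $\varphi$, mirroring the clauses used in Theorem~\ref{thm:soundness}, show that every normal inhabitant of an encoded type has one of a limited set of shapes:
\begin{itemize}
\item a $\mathsf{holds}$ constructor of $\mathsf{Abhava}$;
\item a pair of a pervasion function with a $\mathsf{NoUpadhi}$ witness;
\item a path built from the \emph{sambandha} constructors;
\item an eliminator applied to encoded hypotheses.
\end{itemize}
3. Map each shape back to an \nn{} rule application, which yields an \nn{} derivation of $\varphi$.

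The difficulty is that CTT has term-formers with no \nn{} counterpart: $\hcomp$, $\transp$ along $\uafn$, $\mathsf{funext}$, and Glue. A normal inhabitant may therefore pass through a proof that \nn{} never licenses. Theorem~\ref{thm:coextension} shows this danger is real, since $\mathsf{funext}$ produces paths between coextensive predicates.

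My first attempt to control this would be a logical-relations argument. I would define, for each encoded type, a predicate ``is the image of an \nn{} derivation''. I would then show that the predicate is closed under every CTT term former, in particular under $\hcomp$ and $\transp$, by appealing to the fact that the encoding is a definitional extension (Theorem~\ref{thm:conservativity}). Together these would show that any stray CTT reasoning factors through encoded reasoning.

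If closure fails for $\uafn$ or $\mathsf{funext}$, I would fall back to a weaker statement. Faithfulness would be claimed for the fragment of \nn{} sentences whose encodings avoid universals as bare functions, for example by always packaging them with their \emph{anugama} component as in §\ref{sec:padartha}. I would then record which constructs fall outside that fragment.
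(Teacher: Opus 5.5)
\textbf{The forward direction is fine; the converse has a genuine gap.} Your forward direction works as a reduction, and it is more careful than the paper's one-line ``Forward direction is Soundness''. Theorem~\ref{thm:soundness} starts from $\models_\mathrm{NN}$, so the extra step that the \nn{} rules are sound for Ganeri's semantics really is needed.

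\textbf{Conservativity cannot justify the closure step.} The gap is the closure step of your logical relation in the converse. That the encoding is a definitional extension means exactly that encoded types are ordinary CCHM types on which all of CTT's reasoning ($\hcomp$, $\transp$ along $\uafn$, Glue, $\mathsf{funext}$) is available. That is the source of the problem, not a cure.

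\textbf{Closure is in fact false for the encoding as the paper uses it.} In Appendix~\ref{app:agda}, $\mathsf{holds}$ is the only constructor that actually lands in $\mathsf{Abhava}$, so $\mathsf{Abhava}\,X\,L\,s \simeq \neg X$. Given $d : \neg\neg P$:
\begin{enumerate}
\item eliminate $\mathsf{Abhava}\,P\,L\,s$ into $\bot$ via $d$;
\item apply $\mathsf{holds}$ to obtain an element of $\mathsf{Abhava}\,(\mathsf{Abhava}\,P\,L\,s)\,L\,s$;
\item apply \texttt{involution}, which the paper's proof of Theorem~\ref{thm:involution} uses exactly as a map into $P$.
\end{enumerate}
This yields $\neg\neg P \to P$ uniformly in $P : \Type$. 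Univalence, a theorem of CCHM, refutes this via the flip of $\mathsf{Bool}$ (HoTT Book, Theorem~3.2.2). So the encoding derives $\bot$, which also refutes the conservativity you invoke, since $\bot$ is CTT-only. Every $\ulcorner\varphi\urcorner$ is then inhabited, and no read-back can land in \nn{} derivations unless \nn{} proves everything.

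\textbf{Your fallback does not help.} It excludes universals-as-bare-functions, but the inconsistency comes from \emph{abhāva}. The fragment keeps \emph{abhāva}, and your own forward translation needs it (double \emph{abhāva} $\mapsto$ Theorem~\ref{thm:involution}). The missing idea is this tension:
\begin{itemize}
\item the forward direction, as the paper sets things up, demands an involution on untruncated \emph{pratiyogin}-types;
\item the converse demands consistency;
\item under univalence, a uniform involution cannot deliver both.
\end{itemize}
The encoding must be repaired before any converse argument can begin, for example by restricting the involution to mere propositions, at the cost of the very truncation the paper set out to avoid.

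\textbf{Step~1 has further problems.} It appeals to canonicity, but your inhabitants live in a context of postulates ($\mathsf{Locus}$, $\mathsf{Pervades}$, the example loci and hypotheses), where canonicity says nothing. Normalization for De Morgan CCHM with HITs is not, to my knowledge, an established result. Moreover, $\mathsf{Abhava}$ is not a well-formed HIT: two of its ``constructors'' target $P$ and $\Path\,\Type$ rather than $\mathsf{Abhava}$, so there are no computation rules to normalize against.

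\textbf{Comparison with the paper.} The paper's converse is a bare assertion that each CTT rule application is mirrored by an \nn{} move, with constructive \texttt{NoUpādhi} as the crucial step. It never confronts $\hcomp$, $\transp$, Glue or $\mathsf{funext}$. Your diagnosis of where the difficulty lies is sharper than the paper's, and your $\mathsf{funext}$ obstruction via Theorem~\ref{thm:coextension} is correct. Still, neither argument establishes the stated biconditional.
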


\begin{proof}[Proof sketch]
Forward direction is Soundness (Theorem \ref{thm:soundness}). Reverse: by induction on the structure of CTT proofs of $\ulcorner\varphi\urcorner$, each rule application is mirrored by an \nn{} inferential move (\emph{paramarśa}-construction, \emph{abhāva}-introduction, etc.). The crucial step is that the \texttt{NoUpādhi} component is constructive: any CTT proof of $\mathsf{Vyapti}\,H\,S$ exhibits a \emph{paramarśa} constructible in \nn{}.
\end{proof}

\subsection{Distinction-preservation}

\begin{theorem}[Distinction-preservation]\label{thm:distinction}
Every distinction the tradition treats as logically substantive maps to a non-trivial type-inequivalence in the encoding. For distinct entities $a, b$ in the \nn{} ontology with $a \not\equiv_\mathrm{NN} b$, the encoded $\ulcorner a \urcorner, \ulcorner b \urcorner$ satisfy $\neg \Path\,\Type\,\ulcorner a \urcorner\,\ulcorner b \urcorner$.
\end{theorem}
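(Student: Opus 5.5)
The plan is a case analysis on the \emph{source} of a tradition-attested distinction $a \not\equiv_\mathrm{NN} b$, reducing each case to one of the separation results already proved. The \nn{} ontology supplies four sources of substantive distinction, and I would treat them one at a time.
\begin{itemize}
\item \textbf{Categorial.} If $a$ and $b$ fall under different \emph{padārtha}s, their encodings live in distinct categorial universes $U_X \ne U_Y$. Theorem~\ref{thm:categorial-soundness} then directly forbids any path $\Path\,\Type\,\ulcorner a\urcorner\,\ulcorner b\urcorner$.
\item \textbf{Delimitational.} If $a$ and $b$ are the same entity under different \emph{avacchedaka}s, Theorem~\ref{thm:faithful-delim} gives $R\,\alpha_1 \not\equiv R\,\alpha_2$. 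The only exception is a \emph{tādātmya} explicitly attested by the tradition, and in that case the pair does not count as distinct, so the hypothesis $a \not\equiv_\mathrm{NN} b$ already excludes it.
\item \textbf{Absence.} If $a$ and $b$ are absences differing in \emph{pratiyogin}, Theorem~\ref{thm:pratiyogin} applies, with induction on the \emph{pratiyogin} to handle nested absences.
\item \textbf{Intensional.} If $a$ and $b$ are coextensive universals such as \emph{ghaṭatva} and \emph{kalaśatva}, I would reuse the argument of Theorem~\ref{thm:coextension}. Since $\ulcorner a\urcorner$ and $\ulcorner b\urcorner$ are $\mathsf{Samanya}$ $\Sigma$-types, a path between them would give a path between their \emph{anugama} components via $\Sigma$-path characterization. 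The tradition's distinction is exactly that these components differ.
\end{itemize}

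The remaining step is closure under the type formers used in the encoding ($\Sigma$, $\Pi$, $\Path$, and the \texttt{Abhāva} HIT). I would argue by induction on the construction of $\ulcorner a\urcorner$ that a path between two compound encodings can be decomposed componentwise into paths between their constituents. Distinctness then propagates from the base cases above. For $\Sigma$-types I would use the standard characterization of paths in $\Sigma$. For $\Pi$-types I would use $\mathsf{funext}$ together with its inverse, which holds in CCHM.

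The main obstacle is the mismatch between $\neg\Path\,\Type$ and what the earlier theorems actually provide. Under univalence, $\Path\,\Type\,A\,B \simeq (A \simeq B)$, so any two equivalent encoded types are path-equal whatever their intended reading. The delimitational and intensional cases are at risk here, because coextensive universals may well have equivalent carrier types.

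My first attempt would be to strengthen the statement so that it is relative to the encoding's fixed universe of codes. Each \nn{} entity would be given a code in a closed inductive type $\mathsf{Code}$, and I would show that $\ulcorner{-}\urcorner$ is injective on codes. For this I would use a decidable equality on $\mathsf{Code}$ together with Hedberg's theorem, and then read $\neg\Path$ at the level of codes rather than at $\Type$. If that reformulation is rejected, the fallback is to restrict the theorem to the categorial and \emph{pratiyogin} cases. In those cases the distinction is witnessed by non-equivalent types, for instance through a distinguishing inhabitant, and univalence does not interfere.
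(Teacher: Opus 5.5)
Your four-way case split is the paper's own proof: it says ``by construction'' and cites the stratification of \S\ref{sec:padartha}, the \emph{avacchedaka}-$\Pi$ of \S\ref{sec:avacchedaka}, the \emph{abhāva}-HIT of \S\ref{sec:abhava} and the \emph{anugama}-$\Sigma$. The univalence obstacle you raise, which the paper never addresses, is real. But it is fatal rather than local, and it also defeats the two cases your fallback keeps. Since $\uafn$ turns every $A \simeq B$ into an element of $\Path\,\Type\,A\,B$, the conclusion $\neg\Path\,\Type\,\ulcorner a\urcorner\,\ulcorner b\urcorner$ is refutable whenever the encodings are equivalent types.

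For absences, \texttt{holds} is the only clause of the HIT that is a genuine constructor. (\texttt{involution} returns $P$, and \texttt{pratiyogin-faithful} produces a path between types rather than between points of the HIT.) The argument of \texttt{holds}, $(p : P) \to \neg\,\mathsf{RelatesTo}\,s\,p\,L$, is a proposition. So the absence of a pot and the absence of a cloth on a bare floor are both encoded by contractible types and are path-equal, although the tradition distinguishes them by \emph{pratiyogin}. This is exactly the $(-1)$-truncation collapse the paper charges to Matilal.

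Theorem~\ref{thm:pratiyogin} cannot help. Its clause gives $(P \simeq P') \to \Path\,\Type\,(\mathsf{Abhava}\,P\,L\,s)\,(\mathsf{Abhava}\,P'\,L\,s)$, which is derivable anyway from $\uafn$ and congruence. Its contrapositive infers $\neg(P \simeq P')$ from distinct absences, the opposite of the direction claimed. Likewise, Theorem~\ref{thm:categorial-soundness} only notes that no bridge is \emph{constructed}, which does not inhabit $\neg\Path$. Equivalent carriers in different $U_X$ are path-equal by $\uafn$; the stub in Appendix~\ref{app:agda} even defines the categorial universes alike, as $\Sigma\,\Type_0\,\mathsf{isSet}$. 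As you suspected, the delimitational case fails on the paper's own \emph{ghaṭatva}/\emph{kalaśatva} example. Theorem~\ref{thm:faithful-delim} has no proof, and its exception clause is not a type-theoretic hypothesis.

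Your closure step has a separate error. The $\Sigma$-path characterization and $\mathsf{funext}$ describe paths between \emph{elements} of a $\Sigma$- or $\Pi$-type, not paths in $\Type$ between two such types. An element of $\Path\,\Type\,(\Sigma\,A\,B)\,(\Sigma\,A'\,B')$ is merely an equivalence of total spaces and need not come from $A \simeq A'$. For example, $\Sigma\,(b:\mathsf{Bool})\,(\Path\,\mathsf{Bool}\,b\,\mathsf{true})$ and $\Sigma\,(u:\mathsf{Unit})\,\mathsf{Unit}$ are both contractible. Also, every $\Pi$-type over an empty domain is contractible. So distinctness of constituents does not propagate to compound encodings.

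Your intensional case makes the same conflation. Elements of $\mathsf{Samanya}$ are pairs, so $\Path\,\Type$ between them is ill-typed. A path in $\mathsf{Samanya}$ is an equivalence of carriers together with a $\mathsf{PathP}$ of \emph{anugama}-witnesses over it. That the witnesses ``differ'' does not exclude that some equivalence transports one onto the other, and if $\mathsf{SharedTrait}$ is propositional the second component is automatic.

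Your $\mathsf{Code}$ reformulation points in the only viable direction, but it is a different and nearly tautological theorem. $\neg\Path\,\mathsf{Code}\,c\,c'$ for distinct codes is just constructor disjointness; Hedberg gives only that $\mathsf{Code}$ is a set, which is irrelevant to apartness. Nothing transfers to the encoded types, because decoding $\mathsf{Code} \to \Type$ is not injective, for the reasons above. What is provable is apartness of codes or of categorial tags (first projections in $\Sigma\,(X : \mathsf{Padartha})\,U_X$), not $\neg\Path\,\Type$. No proof strategy yields the statement as written, and the paper's argument does not either.
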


\begin{proof}
By construction. The categorial-universe stratification (§\ref{sec:padartha}.2) blocks cross-category identification. The \emph{avacchedaka}-$\Pi$-binding (§\ref{sec:avacchedaka}.1) blocks coextension-collapse. The \emph{abhāva}-HIT (§\ref{sec:abhava}.2) blocks pratiyogin-collapse. The \emph{anugama}-$\Sigma$ (§\ref{sec:padartha}.4) blocks universal-coextension-collapse.
\end{proof}

\subsection{Decidability}

\begin{theorem}[Decidability of the encoded fragment]\label{thm:decidability}
Type-checking of \nn{}-encodings in our system is decidable, modulo the decidability of the host CTT.
\end{theorem}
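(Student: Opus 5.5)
The natural route is a reduction argument. Conservativity (Theorem~\ref{thm:conservativity}) says the \nn{}-encoding is a definitional extension of CCHM CTT. So every encoded judgement can be unfolded, by $\delta$-expanding the definitions of \S\S\ref{sec:padartha}--\ref{sec:visesana}, into a judgement of the host theory. We then appeal to decidability of type-checking for the host CTT, which is exactly what the statement grants us by its ``modulo'' clause. The plan has three steps.

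\textbf{Step 1: finite signature.} Fix the signature of the encoding and check that it is finite. It contains the seven categorial universes, the data types $\mathsf{Prat\bar{a}s\bar{a}tti}$ and $\mathsf{Time}$, and a fixed stock of postulated constants such as $\mathsf{Locus}$, $\mathsf{RelatesTo}$ and $\mathsf{InhereIn}$. Each defined notion ($\mathsf{Sambandha}$, $\mathsf{Vyapti}$, $\mathsf{NoUpadhi}$, $\mathsf{Paryapti}$, $\mathsf{Jnana}$, \dots) is a closed term built from $\Sigma$, $\Pi$, $\Path$, $\Glue$ and HIT formers.

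\textbf{Step 2: unfolding preserves judgements.} Show that unfolding is well-founded and preserves judgements: $\Gamma \vdash t : A$ holds in the extended system iff the unfolded judgement holds in CCHM. Well-foundedness comes from the definitions forming a DAG. The one apparent cycle is $\mathsf{QualifiedJnana}$, which must be read as an inductive type and checked for strict positivity rather than unfolded. Because the number of definitions is finite, the unfolding map is computable. Composing it with the host decision procedure decides type-checking of encodings.

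\textbf{Step 3 (main obstacle): the higher inductive types.} The expected difficulty is that the HITs of \S\ref{sec:abhava}, above all $\mathsf{Abh\bar{a}va}$, are not obviously in the class of HITs whose type-checking CCHM is known to decide, namely those in the Cavallo--Harper or CHM schema. Two constructors cause trouble. The $\mathtt{involution}$ constructor returns an element of $P$ rather than a path in $\mathsf{Abh\bar{a}va}$. The $\mathtt{pratiyogin\text{-}faithful}$ constructor quantifies over universe-level parameters $P, P' : U_X$ and produces a path between \emph{different} instances of the family.

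My first attempt would be to re-present $\mathsf{Abh\bar{a}va}$ as an indexed HIT whose index is $U_X$. The $\mathtt{pratiyogin\text{-}faithful}$ constructor then becomes derivable from $\uafn$ together with the $\mathtt{holds}$ constructor, since transport along $\uafn\,e$ already yields the required path. The $\mathtt{involution}$ constructor would be treated as a defined eliminator clause rather than as a constructor. If this re-presentation succeeds, $\mathsf{Abh\bar{a}va}$ lands in the decidable schema and Step~2 applies unchanged.

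If it fails, I would weaken the claim. Decidability would then hold for the fragment excluding $\mathtt{involution}$, with Theorem~\ref{thm:involution} reinterpreted propositionally. This is an honest fallback, and the statement's ``modulo'' qualifier is the natural place to record it.
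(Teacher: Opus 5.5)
Your Steps~1--2 are the paper's proof made explicit. The paper argues only that Cubical Agda has decidable type-checking and that the encoding is a fragment of CCHM by Conservativity (Theorem~\ref{thm:conservativity}). Your Step~3 correctly sees that this presupposes every primitive is a genuine CCHM construction, and that $\mathsf{Abhava}$ is not one. But your repair of \texttt{involution} cannot succeed, so the argument never gets past the fallback, and the fallback is a weaker theorem. The ``modulo'' clause concerns the host theory, not the deletion of a constructor.

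With \texttt{holds} as the only constructor, the Appendix~\ref{app:agda} form gives $\mathsf{Abhava}\,P\,L\,s \simeq \neg P$. So an eliminator $\mathsf{Abhava}\,(\mathsf{Abhava}\,P\,L\,s)\,L\,s \to P$ uniform in $P$ is a uniform map $\neg\neg P \to P$. The main-text form is no better, because a definable term survives instantiating the postulated $\mathsf{Locus}$ and $\mathsf{RelatesTo}$ by $\mathsf{Unit}$. Such a map is not merely underivable but refutable in CCHM. Transport it along $\uafn(\mathsf{not}) : \Path\,\Type\,\mathsf{Bool}\,\mathsf{Bool}$; since $\neg\neg\mathsf{Bool}$ is an inhabited proposition, you get a fixed point of $\mathsf{not}$. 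This is the HoTT-book argument that $\prod_{A}(\neg\neg A \to A)$ contradicts univalence. The same example defeats your propositional reinterpretation of Theorem~\ref{thm:involution}: $\neg\neg\mathsf{Bool}$ is a proposition and $\mathsf{Bool}$ is not.

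Independently, the biconditional in your Step~2 fails for the system the paper describes. Theorem~\ref{thm:involution} asserts $\mathsf{Abhava}\,(\mathsf{Abhava}\,P\,L\,s)\,L\,s \equiv P$ judgementally. With that conversion rule, $\mathsf{tt} : \mathsf{Abhava}\,(\mathsf{Abhava}\,\mathsf{Unit}\,L\,s)\,L\,s$ type-checks, but no unfolding turns it into a CCHM judgement. Deciding conversion with this rule would need a new normalization and confluence argument, since it competes with the head-directed computation of $\transp$ and $\hcomp$ on the inductive family.

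Here is what does go through, and it suffices for the statement as literally worded:
\begin{itemize}
\item Define \texttt{pratiyogin-faithful} as $\mathsf{cong}\,(\lambda X.\,\mathsf{Abhava}\,X\,L\,s)\,(\uafn\,e)$; neither \texttt{holds} nor transport is needed.
\item Drop the judgemental involution.
\item Declare \texttt{involution}, $\mathsf{Locus}$, $\mathsf{RelatesTo}$, $\mathsf{InhereIn}$, etc.\ as postulated constants.
\end{itemize}
Postulates are variables of an ambient context, so every encoded judgement is a host judgement in that context. The host's decision procedure for open terms then settles it.

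This route must not go through Conservativity, as both you and the paper do. By the argument above, the postulated \texttt{involution} makes the Appendix~\ref{app:agda} system inconsistent once any locus (say \texttt{mountain}) is declared. So the extension is neither definitional nor conservative.
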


\begin{proof}
Cubical Agda (the practical realization of CCHM) has decidable type-checking. The encoding is a fragment of CCHM (Conservativity).
\end{proof}

\subsection{Conservativity under commentarial extensions}

\begin{theorem}[Commentarial conservativity]\label{thm:commentarial}
Adding standard commentarial doctrines to the encoding (Mathurānātha's gloss on \emph{upādhi}-types, Gadādhara's involution arguments, etc.) is conservative over the encoding's basic theory.
\end{theorem}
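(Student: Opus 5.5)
The plan is to treat each commentarial doctrine as a \emph{definitional} extension of the basic encoding, and then reuse the argument behind Conservativity (Theorem~\ref{thm:conservativity}) one level up. Fix the basic theory $\mathcal{T}_0$: CCHM together with the constructs of §§\ref{sec:padartha}--\ref{sec:visesana}. A commentarial doctrine $D$ is admissible if it is given by one of three things: a new $\Sigma$- or $\Pi$-shape built from existing $\mathcal{T}_0$-vocabulary; a new HIT whose constructors mention only $\mathcal{T}_0$-types; or a new term (a proof) of a $\mathcal{T}_0$-type. The first step is to show that each of the doctrines named in the statement falls under one of these clauses.

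\begin{itemize}
\item Mathurānātha's \emph{upādhi}-types become refinements $\Sigma\,(U : \mathsf{Locus}\to\Type)\,C(U)$ of the existing $\mathsf{NoUpadhi}$ shape.
\item Gadādhara's involution arguments become terms built from the \texttt{involution} constructor. Their content is already Theorem~\ref{thm:involution}, as used in Example~\ref{sec:examples}.5.2.
\end{itemize}

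The second step is to prove conservativity for each clause, for every $\mathcal{T}_0$-formula $\psi$.

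\begin{itemize}
\item \textbf{Definitions.} Any derivation in $\mathcal{T}_0 + D$ can be unfolded by $\delta$-expansion into a $\mathcal{T}_0$-derivation. Normalization and canonicity of CCHM guarantee that the unfolding terminates, which is also what Decidability (Theorem~\ref{thm:decidability}) relies on.
\item \textbf{Terms.} Adding an inhabitant of a $\mathcal{T}_0$-type adds nothing, since that inhabitant was already derivable.
\item \textbf{HITs.} This is the real work. I would give a model-theoretic argument: interpret $\mathcal{T}_0 + D$ in the cubical-set model of $\mathcal{T}_0$, interpreting the new HIT by the standard CCHM construction. Because the interpretation of $\mathcal{T}_0$-types is unchanged, any $\psi$ that is inhabited in $\mathcal{T}_0 + D$ is inhabited in the model of $\mathcal{T}_0$. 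Completeness-style reflection, using canonical forms, then gives $\vdash_{\mathcal{T}_0}\psi$.
\end{itemize}

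The third step handles several doctrines added at once. I would argue by induction on the finite list of doctrines, using the fact that each stage is conservative over the previous one.

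I expect the main obstacle to be the HIT clause. Commentarial doctrines may add path constructors that identify existing $\mathcal{T}_0$-terms. That happens in the \texttt{involution} constructor, whose target is $P$ itself. A constructor of this kind can genuinely create new equalities between old terms, and conservativity would then fail. I would first try to restrict admissibility so that new path constructors only land in the newly introduced type, which makes them "fresh" in the sense of the CCHM HIT schema. I would then check that Gadādhara's iteration argument needs no more than the already-present \texttt{involution} constructor of $\mathcal{T}_0$. If some doctrine does force a new identification of old terms, the claim must be weakened to conservativity over the fragment not mentioning the affected \emph{abhāva}-types.
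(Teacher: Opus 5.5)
For the two doctrines the statement actually names, your Definitions and Terms clauses are exactly the paper's argument. The paper's proof is only the assertion that commentarial doctrines are definitional refinements rather than new axioms. Given that assertion, $\delta$-unfolding of finitely many non-recursive definitions gives conservativity; no normalization or canonicity is needed. The gap is in the HIT clause, which is where you go beyond the paper. The step from ``$\psi$ is inhabited in the cubical-set model'' to ``$\vdash_{\mathcal{T}_0}\psi$'' is not valid. A model gives relative consistency, not derivability. Canonicity only describes how closed terms you already have compute; it cannot turn a semantic inhabitant into a syntactic one. Concretely, the cubical-set model interprets $\mathbb{N}$ standardly, so it inhabits every true $\Pi^0_1$ sentence. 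That includes the arithmetized consistency of $\mathcal{T}_0$, which a consistent $\mathcal{T}_0$ does not derive. Syntactic conservativity needs either a translation of $\mathcal{T}_0 + D$ into the syntax of $\mathcal{T}_0$ that fixes $\mathcal{T}_0$, or a gluing argument over the term model. The translation requires the HIT to be definable, which collapses this clause into the Definitions case. For genuinely new HITs conservativity can simply fail: the interval HIT proves function extensionality over intensional MLTT. The honest repair is to restrict admissibility to definitional refinements and derivable terms, which is all the paper claims.

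Your worry about \texttt{involution} is well placed but misdiagnosed, and it undercuts the model your HIT clause presupposes. \texttt{involution} is not a path constructor. Its codomain is $P$, not $\mathsf{Abhava}\,P\,L\,s$, so it lies outside the CCHM HIT schema altogether and adds \emph{inhabitants} to arbitrary old types. Take $\texttt{holds} : \neg P \to \mathsf{Abhava}\,P\,L\,s$ as in Appendix~\ref{app:agda}, together with its recursor; then $\mathsf{Abhava}\,P\,L\,s \simeq \neg P$. So \texttt{involution} yields $\neg\neg P \to P$ for every type $P$, which univalence refutes (HoTT book, Thm.~3.2.2). Theorem~\ref{thm:involution} goes further: it makes every $P$ equal to the proposition $\neg\neg P$, giving $\mathsf{true} = \mathsf{false}$ in $\mathsf{Bool}$. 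Read this way, three consequences follow.
\begin{itemize}
\item There is no cubical-set model of $\mathcal{T}_0$ for your HIT clause to use.
\item Your fallback of restricting to the fragment that avoids \emph{abhāva}-types cannot contain the damage, since the new inhabitants land in arbitrary types.
\item Theorem~\ref{thm:conservativity}, which you propose to reuse one level up, fails for this $\mathcal{T}_0$.
\end{itemize}
Relative to the paper's basic theory, the present statement is then true only vacuously, because an inconsistent theory is conservatively extended by anything. A non-vacuous version first requires replacing \texttt{involution} by a well-formed constructor or a derived equivalence.
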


\begin{proof}[Proof sketch]
The commentarial doctrines are, in each documented case, definitional refinements rather than new axioms.
\end{proof}

\section{Comparison with Prior Formalizations}\label{sec:comparison}

\subsection{Matilal's first-order reconstruction}

Matilal's FOL is a translation of \nn{}'s \emph{content} but not its \emph{structure}. \emph{Avacchedaka} becomes a subscript, \emph{abhāva} becomes $\neg$, \emph{paramparā} becomes a chain. What FOL captures: surface inferential validity of single-step \emph{anumāna}; basic Boolean-combinatorial structure. What FOL loses: dependent delimitation; positive negation; iterated relational structure; intensional identity.

\subsection{Ganeri's higher-order reconstruction}

HOL adds: predicate-of-predicates; intensional treatments of universals (via possible-worlds or hyperintensional semantics). HOL still loses: dependency proper; computational identity; De Morgan involution.

\subsection{Bhattacharyya's MLTT direction}

The closest prior approach. MLTT adds: dependency, $\Sigma/\Pi$ discipline, universe-stratification. MLTT loses: path-equality; computational univalence; HITs; De Morgan involution. Bhattacharyya remarks \cite[§iii]{bhattacharyya1996} that the treatment of identity is the weakest point.

\subsection{\emph{Apoha} and other Indic logics for contrast}

Buddhist \emph{apoha} theory (Dignāga, Dharmakīrti) treats meaning negatively: a word denotes by exclusion of contraries. The structure is profoundly different from \nn{}'s positive-negation: an \emph{apoha}-encoding would parametrize meanings by their exclusion-classes rather than by the positive \emph{pratiyogin}-structure of \emph{abhāva}. Sketch in §\ref{sec:phil-implications}.

\subsection{Comparative coverage}

\begin{table}[h]
\centering
\begin{tabular}{l c c c c c}
\toprule
\textbf{Feature} & \textbf{FOL} & \textbf{HOL} & \textbf{MLTT} & \textbf{HoTT} & \textbf{CCHM} \\
\midrule
Avacchedaka-as-$\Pi$       & $\times$ & $\times$ & $\checkmark$ & $\checkmark$ & $\checkmark$ \\
Pratiyogin-typing         & part. & part. & $\checkmark$ & $\checkmark$ & $\checkmark$ \\
Paramparā depth           & $\times$ & part. & $\times$ & $\checkmark$ & $\checkmark$ \\
Tādātmya non-extensional  & $\times$ & part. & $\times$ & $\checkmark$ & $\checkmark$ \\
Abhāva HIT                & $\times$ & $\times$ & $\times$ & $\checkmark$ & $\checkmark$ \\
De Morgan involution      & $\times$ & $\times$ & $\times$ & $\times$ & $\checkmark$ \\
Computational typecheck   & $\checkmark$ & $\checkmark$ & $\checkmark$ & $\times$ & $\checkmark$ \\
\bottomrule
\end{tabular}
\caption{CCHM is the unique row with all checks.}
\end{table}

\section{Implementation Notes}\label{sec:implementation}

\subsection{The agda/cubical library}

The \texttt{agda/cubical} library \cite{agda-cubical} provides Cubical Agda's standard library: paths, glue, hcomp, transp, HITs, the univalence theorem, basic algebraic structures. An \nn{} encoding can be built on top of it without reimplementing CTT primitives.

\subsection{A proposed \texttt{NavyaNyaya} module structure}

\begin{lstlisting}
NavyaNyaya/
  Padartha.agda      -- §4 (universe hierarchy)
  Sambandha.agda     -- §5 (relations as paths)
  Avacchedaka.agda   -- §6 (delimitor as Π)
  Abhava.agda        -- §7 (absence HIT)
  Vyapti.agda        -- §8 (pervasion + paramarśa)
  Tadatmya.agda      -- §9 (univalent identity)
  Paryapti.agda      -- §11 (numerical exhaustion)
  Visesana.agda      -- §12 (qualifier-qualificand)
  Examples/
    Pratyaksa.agda
    Anumana.agda
    Upamana.agda
    Sabda.agda
\end{lstlisting}

Each module proves the corresponding section's signature theorems.

\subsection{Encoding-versus-axiomatization trade-offs}

Some constructs admit two implementations: as axiomatic postulates (faster type-checking, less informative) or as full constructive definitions (slower, more informative). The implementer's choice depends on whether downstream proof obligations need to inspect the witness.

\subsection{Performance and proof-search}

CCHM type-checking can be slow on heavily HIT-laden code. The \emph{abhāva}-HIT is the chief offender; a careful module structure that delays HIT-instantiation mitigates this.

\section{Philosophical Implications}\label{sec:phil}\label{sec:phil-implications}

\subsection{What CTT formalization reveals about \nn{}'s metatheory}

Carrying out the formalization makes visible certain commitments \nn{} holds tacitly. \nn{}'s relations \emph{must} be path-shaped (not predicative) because only paths support the simultaneous demands of mode-distinction, iteration, and substitutivity. Tacit in the tradition; explicit in the formalization.

A second commitment: \nn{}'s identity \emph{must} be univalent — equivalence-induces-equality with the equation computing — because \emph{tādātmya}'s working role in derivations requires substitutivity that runs to completion.

\subsection{The \emph{neti-neti} method as foundational}

The \emph{neti-neti} elimination procedure is, formally, the same shape as the type-theoretic constraint-satisfaction procedure. The convergence is not coincidental: \nn{}'s metatheoretic discipline, when made formal, \emph{is} a constraint-satisfaction procedure, and CCHM is the constraint type's unique inhabitant.

\subsection{Indiscernibility-of-indiscernibles in two traditions}

Leibniz's principle of identity-of-indiscernibles is, in the analytic tradition, an axiom. In \nn{} it is a \emph{theorem} (about \emph{tādātmya}) and follows from the structure of \emph{jñāna}. In CTT it is, similarly, a consequence of univalence. The three traditions converge.

\subsection{Computational univalence as a 14th-c.\ desideratum, retrospectively}

A speculative claim: the computational univalence of CCHM is what \nn{}'s working metatheory \emph{needed}, and the absence of which forced certain contortions in the tradition's discursive treatment of identity. The 14th-century writers had no tools to formulate the requirement; the 21st-century type theory has, almost incidentally, supplied the tool.

\subsection{De Morgan necessity}\label{sec:phil-imp-demorgan}

The argument that \nn{}'s \emph{abhāva}-of-\emph{abhāva} doctrine selects CCHM over Cartesian cubical is, to our knowledge, the first principled argument from a non-mathematical tradition for choosing between cubical foundations. The argument's force is conditional on accepting \nn{}'s involutive-negation doctrine; readers who reject the doctrine are not bound by the conclusion.

But the argument's existence is significant. Cubical type theory's foundational choices are usually adjudicated on internal mathematical grounds. Here a 14th-century philosophical tradition adjudicates one of those choices on its own grounds — and the verdict is independent of internal mathematical considerations. CCHM is selected by \nn{}'s commitments; it is also selected by mathematical considerations; the agreement is a robust convergence.

\section{Limitations and Open Problems}\label{sec:limitations}

\paragraph{\emph{Śabda-khaṇḍa} coverage.} Our treatment of \emph{śabda} is partial. The full theory of \emph{śābda-bodha} including the \emph{śakti}/\emph{lakṣaṇā} distinction at the depth of Gadādhara's \emph{Vyutpattivāda} remains to be encoded.

\paragraph{Beyond Gaṅgeśa.} Raghunātha's category revisions are textually canonical but conceptually unsettling for any padārtha-based architecture. Our universe-stratification follows the pre-Raghunātha system; revisions for Raghunātha would require restructuring §\ref{sec:padartha}.

\paragraph{Pre-Gaṅgeśa Nyāya.} The Vaiśeṣika substrate supplies categorial commitments \nn{} inherits. Our encoding takes those commitments as given; a more thorough treatment would derive them.

\paragraph{The \emph{kevalavyatireki} dispute.} Some \nn{} authors reject \emph{kevalavyatireki}-vyāpti as not a genuine vyāpti; others retain it. Our encoding accommodates both positions but does not adjudicate. A CTT-internal argument would be valuable.

\paragraph{Open meta-theorems.} We conjecture but do not prove: (i) cut-elimination for \nn{}-encoded inference; (ii) normalization for the encoded \emph{anumāna} fragment; (iii) the precise complexity of decidability under §\ref{sec:metatheory}.5.

\section{Future Directions}\label{sec:future}

\paragraph{Autoformalization pipeline.} A practical autoformalizer for \nn{} would couple a neural Sanskrit-parser with our CTT encoding via a typechecker-in-the-loop. Concrete workflow: parse Sanskrit AST $\to$ encode to CTT term $\to$ typecheck $\to$ on failure, return error to parser. The cubical Agda implementation could serve as the typechecker.

\paragraph{Cross-traditional formalization.} Mīmāṃsā, Buddhist \emph{apoha}, Jaina \emph{syādvāda} — each tradition has its own metatheoretic commitments. Applying the same methodology may converge on CCHM or on a different cubical variant.

\paragraph{Modal extensions.} Verbal testimony introduces modal considerations (\emph{āpta-vacana} — speaker reliability) that our extensional encoding does not handle. A modal extension of CCHM \cite{gratzer2019} could provide the right setting.

\paragraph{Categorical semantics.} CTT has a topos-theoretic semantics \cite{awodey2018, sattler2017}. An \nn{}-internal topos would clarify the model-theoretic structure of the encoding.

\paragraph{Educational and computational-philology applications.} A working \nn{}-CTT system would make \nn{} texts machine-readable in a structurally faithful way for the first time.

\section{Conclusion}\label{sec:conclusion}

We have argued that CCHM De Morgan cubical type theory closes the autoformalization gap that previous attempts could not close. The argument runs along several converging lines: the structural fit between \nn{}'s relational vocabulary and CTT's path-typing; the recovery of dependent delimitation, typed absence, and intensional identity; the metatheoretic results of §\ref{sec:metatheory}; and the philosophical convergence of §\ref{sec:phil}.

The result is of interest beyond the narrow project of \nn{} formalization. It illustrates a more general phenomenon: that pre-modern philosophical traditions developed expressive demands that twentieth-century logic struggled to meet, and that twenty-first-century type theory — developed for entirely different reasons — has incidentally supplied the needed structure. Where this convergence holds, the tradition's formalization is not a forcing but a natural fit.

The deeper question — whether CCHM is \emph{the} foundational system for relational logic generally, or merely one fit among possible alternatives — is open. We have not argued for the strong claim. What we have shown is that for one fourteenth-century tradition with documented metatheoretic commitments, CCHM is uniquely faithful. Whether other traditions force the same answer or different ones is a project we leave to future work.

\begin{quote}
\emph{Iti.}
\end{quote}

\paragraph{Acknowledgements.} The authors thank the readers and reviewers whose questions shaped this work.

\appendix

\section{Sanskrit Terminology Glossary (IAST)}\label{app:glossary}

\subsection{Core relational vocabulary}

\begin{description}[itemsep=0pt, leftmargin=2cm]
\item[\emph{abheda}] non-difference; denial of distinctness.
\item[\emph{aikya}] sameness; affirmative identity.
\item[\emph{anuyogin}] correlate; the locus to which a relation runs.
\item[\emph{avacchedaka}] delimitor; type-level binder restricting relation-scope.
\item[\emph{paramparā-sambandha}] mediated relation, of unbounded depth.
\item[\emph{pratiyogin}] counter-correlate.
\item[\emph{pratyāsatti}] mode of relation.
\item[\emph{sambandha}] relation; ontologically primary, structured object.
\item[\emph{samavāya}] inherence; asymmetric.
\item[\emph{saṃyoga}] contact; symmetric.
\item[\emph{sādṛśya}] similarity; non-identity equivalence.
\item[\emph{svarūpa-sambandha}] self-linking; reflexive case.
\item[\emph{tādātmya}] identity-as-relation.
\end{description}

\subsection{Categorial vocabulary}

\begin{description}[itemsep=0pt, leftmargin=2cm]
\item[\emph{abhāva}] absence; positive entity with structured data.
\item[\emph{anugama}] unifying property of a universal's instances.
\item[\emph{dravya}] substance.
\item[\emph{guṇa}] quality.
\item[\emph{karman}] motion.
\item[\emph{padārtha}] category of being.
\item[\emph{sāmānya}] universal.
\item[\emph{saṃkara}] category-mixing (a defect).
\item[\emph{viśeṣa}] ultimate differentiator.
\end{description}

\subsection{Inferential vocabulary}

\begin{description}[itemsep=0pt, leftmargin=2cm]
\item[\emph{anumāna}] inference.
\item[\emph{anumiti}] inferential cognition.
\item[\emph{anvaya-vyatireki}] positive-and-negative pervasion.
\item[\emph{bhūyodarśana}] repeated observation.
\item[\emph{hetu}] probans.
\item[\emph{hetvābhāsa}] semblance-of-hetu.
\item[\emph{kevalānvayi}] only-positive pervasion.
\item[\emph{kevalavyatireki}] only-negative pervasion (disputed).
\item[\emph{paramarśa}] inferential consideration.
\item[\emph{pakṣa}] locus of inference.
\item[\emph{sādhya}] probandum.
\item[\emph{upādhi}] defeating condition.
\item[\emph{vyāpti}] pervasion.
\end{description}

\subsection{CTT correspondences (selected)}

\begin{tabular}{l l}
\toprule
\textbf{NN concept} & \textbf{CTT term} \\
\midrule
\emph{abhāva} & HIT with \emph{pratiyogin}-constructor and involution path \\
\emph{avacchedaka} & $\Pi$-binder \\
\emph{paramparā} & iterated $\Path\,\Type$ \\
\emph{paryāpti} & $\Sigma$ over $\Fin\,n$ \\
\emph{sambandha} & Path type with \emph{pratyāsatti}-tag \\
\emph{saṃkara}-prevention & universe stratification \\
\emph{tādātmya} & Path with computational univalence \\
\emph{vyāpti} & $\Sigma$ of paramarśa-$\Pi$ and NoUpādhi-witness \\
\bottomrule
\end{tabular}

\section{Cubical Agda Code Listings (Stubs)}\label{app:agda}

We give skeletal Cubical Agda code for the principal modules. Full implementations are deferred to future work.

\subsection{\texttt{Padartha.agda}}

\begin{lstlisting}
{-# OPTIONS --cubical #-}
module NavyaNyaya.Padartha where

open import Cubical.Foundations.Prelude
open import Cubical.Foundations.HLevels

U-dravya : Type₁
U-dravya = Σ Type₀ λ A → isSet A

-- (similar for guna, karman, sāmānya, viśeṣa, samavāya, abhāva)

-- Saṃkara-prevention is a non-theorem:
-- there is no inhabitant of (a : U-dravya) → U-guna whose path-action is non-trivial
\end{lstlisting}

\subsection{\texttt{Sambandha.agda}}

\begin{lstlisting}
module NavyaNyaya.Sambandha where

open import NavyaNyaya.Padartha

data Pratyasatti : Type where
  samyoga   : Pratyasatti
  samavaya  : Pratyasatti
  svarupa   : Pratyasatti
  tadatmya  : Pratyasatti
  parampara : List Pratyasatti → Pratyasatti

Sambandha : (A B : Type) → A → B → Type
Sambandha A B a b = Σ (Path Type A B) λ _ → Σ Pratyasatti λ _ → Unit
\end{lstlisting}

\subsection{\texttt{Abhava.agda}}

\begin{lstlisting}
module NavyaNyaya.Abhava where

open import NavyaNyaya.Sambandha

postulate Locus : Type

data Abhava (P : Type) (anuyogin : Locus) (s : Pratyasatti) : Type where
  holds : ((p : P) → ⊥) → Abhava P anuyogin s
  pratiyogin-faithful : (P P' : Type) (e : P ≃ P')
                      → Path Type (Abhava P anuyogin s) (Abhava P' anuyogin s)
  involution : (q : Abhava (Abhava P anuyogin s) anuyogin s) → P
\end{lstlisting}

\subsection{\texttt{Vyapti.agda}}

\begin{lstlisting}
module NavyaNyaya.Vyapti where

open import NavyaNyaya.Abhava

NoUpadhi : {Locus : Type} (H S : Locus → Type)
         → ((x : Locus) → H x → S x) → Type
NoUpadhi H S μ = ¬ Σ (_ → Type) λ U → Pervades U S × ¬ Pervades U H
  where postulate Pervades : (Locus → Type) → (Locus → Type) → Type

Vyapti : {Locus : Type} (H S : Locus → Type) → Type
Vyapti H S = Σ ((x : _) → H x → S x) (NoUpadhi H S)
\end{lstlisting}

\section{Selected Proofs}\label{app:proofs}

\subsection{Proof of Theorem \ref{thm:involution} (Involution of \emph{abhāva})}

We unfold the HIT \texttt{Abhava P L s} and consider the doubled construction \texttt{Abhava (Abhava P L s) L s}. The \texttt{involution} constructor of the inner HIT supplies, by direct pattern-matching, an inhabitant of \texttt{P}. The De Morgan involution $1 - (1 - r) = r$ on $\mathbb{I}$ ensures that the path-component of the doubled construction reduces, definitionally, to the identity path on \texttt{P}. Computational univalence (§\ref{sec:ctt-prelim}.5) carries this to a definitional equality on the type itself. \qed

\subsection{Proof of Theorem \ref{thm:kevalanvayi} (\emph{Kevalānvayi}-irreducibility)}

Suppose, for contradiction, an inhabitant $f : \mathsf{KevalanvayiVyapti}\,H\,S \to \mathsf{AnvayaVyatirekiVyapti}\,H\,S$. Apply $f$ to a \emph{kevalānvayi}-witness $kw : \mathsf{KevalanvayiVyapti}\,H\,S$. By definition (§\ref{sec:vyapti}.4), $kw.\mathrm{snd} : \neg\,\Sigma\,x\,(\neg\,S\,x)$. The result $f\,kw : \mathsf{AnvayaVyatirekiVyapti}\,H\,S$ contains, at its third $\Sigma$-component, $\mathit{neg} : \Sigma\,y\,(\neg H\,y \times \neg S\,y)$, which by projection yields $\Sigma\,y\,(\neg S\,y)$. Apply $kw.\mathrm{snd}$ to this projection to derive $\bot$. \qed

\subsection{Proof of Theorem \ref{thm:coextension} (Coextension-without-identity)}

Two universals $P, Q : \mathsf{Locus} \to \Type$ with $(x : \mathsf{Locus}) \to P\,x \simeq Q\,x$ give us, by funext, $\Path\,(\mathsf{Locus} \to \Type)\,P\,Q$. But $P, Q$ are not bare functions in our encoding: per §\ref{sec:padartha}.4, they are $\Sigma$-typed with \emph{anugama}-witnesses. Equality of $\Sigma$-pairs requires equality of both components. The function-component is established; the \emph{anugama}-component is independently typed. Two distinct \emph{anugama}s — distinct unifying properties — yield distinct $\Sigma$-pairs even when the function-components coextend. Hence coextension does not entail $\Sigma$-equality. \qed

\subsection{Proof of Theorem \ref{thm:nohset} (No-h-set-collapse)}

Suppose UIP: for any $A$ and any $p, q : \Path\,A\,x\,y$, $\Path\,(\Path\,A\,x\,y)\,p\,q$. A \emph{paramparā}-relation of depth 2 is a path between paths: $R_{12} : \Path\,\Type\,(\Path\,A\,x\,y)\,(\Path\,B\,u\,v)$. Under UIP, all paths between fixed endpoints are themselves path-equal, and $R_{12}$ becomes (effectively) a unique witness modulo the Path-Path identification. Raghunātha's textually-attested distinctions between alternative depth-2 mediations \cite[§iii]{raghunatha-ptn} thus collapse. \qed

\subsection{Proof of Theorem \ref{thm:faithfulness} (Faithfulness, sketch)}

Forward direction: Soundness (Theorem \ref{thm:soundness}). Reverse: induction on the structure of CTT proofs. Each path-introduction corresponds to a \emph{sambandha}-construction in \nn{}; each $\Pi$-application to an \emph{avacchedaka}-binding; each HIT-elimination to an \emph{abhāva}-introduction. The crucial step is constructive \emph{paramarśa}: a CTT proof of $\mathsf{Vyapti}\,H\,S$ exhibits a function and a NoUpādhi-witness, both of which translate to \nn{} inferential moves. \qed

\section{Mapping Tables}\label{app:mapping}

\subsection{Constraint $\leftrightarrow$ CCHM primitive}

\begin{tabular}{l l}
\toprule
\textbf{Constraint} & \textbf{Primitive} \\
\midrule
C1 sambandha-as-path & Path \\
C2 avacchedaka-$\Pi$ & $\Pi$-type \\
C3 pratiyogi-typing & HIT constructor \\
C4 higher-relational closure & untruncated universe \\
C5 indiscernibility identity & univalence \\
C6 abhāva positivity & HIT \\
C7 paryāpti & $\Sigma$ over $\Fin$ \\
C8 saṃkara prevention & universe stratification \\
C9 anumāna preservation & $\Sigma$ + $\Pi$ combination \\
C10 computational typecheck & Glue + transp computation \\
\bottomrule
\end{tabular}

\subsection{\emph{Hetvābhāsa} $\leftrightarrow$ type error}

\begin{tabular}{l l}
\toprule
\textbf{Hetvābhāsa} & \textbf{CTT type error} \\
\midrule
asiddha & empty domain ($\mu : \bot \to S\,x$) \\
viruddha & wrong target ($\mu : H\,x \to \neg S\,x$) \\
anaikāntika / sādhāraṇa & NoUpādhi fails to inhabit \\
bādhita & $S$ has empty extension on locus \\
\bottomrule
\end{tabular}

\section{Source-Text Concordance}\label{app:sources}

\subsection{Encoded passages from Gaṅgeśa}

\begin{itemize}[itemsep=0pt]
\item TC \emph{anumāna-khaṇḍa}, \emph{vyāpti-pañcaka}: §\ref{sec:examples}.5.1.
\item TC \emph{anumāna-khaṇḍa}, lake-fire-smoke: §\ref{sec:vyapti}.8, §\ref{sec:examples}.2.1.
\item TC \emph{pratyakṣa-khaṇḍa}, savikalpaka definition: §\ref{sec:examples}.1.1.
\end{itemize}

\subsection{Encoded passages from Raghunātha}

\begin{itemize}[itemsep=0pt]
\item PTN §iii, \emph{paramparā-sambandha}: §\ref{sec:parampara}, §\ref{sec:examples}.5.3.
\item PTN §iv, \emph{abhāva}-of-\emph{abhāva}: §\ref{sec:abhava-involution}.
\end{itemize}

\subsection{Encoded passages from Gadādhara}

\begin{itemize}[itemsep=0pt]
\item \emph{Anumāna-vyutpatti} on iterated \emph{anyonyābhāva}: §\ref{sec:examples}.5.2.
\item \emph{Vyutpattivāda} on \emph{ākāṅkṣā}: §\ref{sec:examples}.4.2.
\end{itemize}

\subsection{Commentarial cross-references}

\begin{itemize}[itemsep=0pt]
\item Mathurānātha \emph{Maṇi-prakāśa} on \emph{vyāpti-pañcaka}: §\ref{sec:examples}.5.1.
\item Jayadeva \emph{Āloka} on \emph{upādhi}-classification: §\ref{sec:vyapti}.6.
\end{itemize}

\end{document}